\newcounter{casecount}
\newproof{pf}{proof}
\newtheorem{theorem}{Theorem}[section]
\newtheorem{lemma}[theorem]{Lemma}
\DeclareMathOperator*{\starspan}{span}
\newtheorem{problem}{Problem}
\newtheorem{remark}[casecount]{Remark}
\numberwithin{equation}{section}
\pgfplotsset{compat=newest}
\pgfplotsset{plot coordinates/math parser=false}
\newlength\figureheight
\newlength\figurewidth
\newcommand{\redline}{\raisebox{2pt}{\tikz{\draw[-,red,line width=1.2pt](0,0)--(6mm,0);}}}
\newcommand{\blueline}{\raisebox{2pt}{\tikz{\draw[-,blue,dotted,line width=1.2pt](0,0)--(6mm,0);}}}
\newcommand{\greenline}{\raisebox{2pt}{\tikz{\draw[-,green,dashed,line width=1.2pt](0,0)--(6mm,0);}}}
\newcommand{\magentaline}{\raisebox{2pt}{\tikz{\draw[-,magenta,dash dot,line width=1.2pt](0,0)--(6mm,0);}}}
\begin{document}
\let\WriteBookmarks\relax
\def\floatpagepagefraction{1}
\def\textpagefraction{.001}
\shorttitle{Implicit Higher-Order Moment Matching }
\shortauthors{M.M.A. Asif et~al.}

\title [mode = title]{Implicit Higher-Order Moment Matching Technique for Model Reduction of Quadratic-bilinear Systems}                      

\tnotetext[cor2]{\ This author is supported by HEC Pakistan under NRPU Project ID 10176.}
\tnotetext[cor1]{\ This author is supported by HEC Pakistan under NRPU Project ID 10176. Also a research visit to Magdeburg was sponsored by DAAD.}
\tnotetext[cor3]{\ This author is supported by DFG Research Training Group RTG 2297/1 ''Mathematical Complexity Reduction- MathCoRe''.}
\author[1,2]{Mian Muhammad Arsalan Asif}[type=editor]
\cormark[1]
\address[1]{School of Electrical Engineering and Computer Sciences, National University of Sciences and Technology, Islamabad, Pakistan}

\author[2]{Mian Ilyas Ahmad}
\address[2]{Research Center of Modelling and Simulation, National University of Sciences and Technology, Islamabad, Pakistan}
\ead[2]{m.ilyas@rcms.nust.edu.pk}
\cormark[2]

\author[3]{Peter Benner}
\address[3]{Max Planck Institute for Dynamics of Complex Technical Systems, Magdeburg,
Germany, and Faculty of Mathematics, Otto-von-Guericke University Magdeburg, Germany}
\cormark[3]
\author[3]{Lihong Feng}
\author[4]{Tatjana Stykel}
\address[4]{Institut f\"ur Mathematik, Unversit\"at Augsburg, Universit\"atsstr. 14, 86159 Augsburg, Germany}
\nonumnote{Corresponding author}

\begin{abstract}We propose a projection based multi-moment matching method for model order reduction of quadratic-bilinear systems. The goal is to construct a reduced system that ensures higher-order moment matching for the multivariate transfer functions appearing in the input-output representation of the nonlinear system. An existing technique achieves this for the first two multivariate transfer functions, in what is called the symmetric form of the multivariate transfer functions. We extend this framework to an equivalent and simplified form, the regular form, which allows us to show moment matching for the first three multivariate transfer functions. Numerical results for three benchmark examples of quadratic-bilinear systems show that the proposed framework exhibits better performance with reduced computational cost in comparison to existing techniques.
\end{abstract}
%
%
\begin{keywords}
Model order reduction, Bilinear systems, Interpolatory projection techniques, Linear parametric systems, Multivariate Moment matching, Quadratic-bilinear systems, Non-linear
\end{keywords}
\maketitle
\section{Introduction}
 We consider the problem of model order reduction for a single-input single-output (SISO) quadratic-bilinear descriptor system of the form
\begin{equation}
\begin{aligned}\label{fom_qbdae}
E\dot{x}(t)&=Ax(t)+Nx(t)u(t)+H\bigl(x(t)\otimes x(t)\bigr)+Bu(t),\\
y(t)&=Cx(t), \qquad x(0)=0
\end{aligned}
\end{equation}
where $E,A,N\in \mathbb{R}^{n\times n}$, $H\in \mathbb{R}^{n\times n^{2}}$, $B\in \mathbb{R}^{n\times 1}$,  $C\in \mathbb{R}^{1\times n}$, $x(t)\in \mathbb{R}^{n}$ is the state vector, and $u(t), y(t)\in \mathbb{R}$ 
are the input and the output, respectively.
 If the quadratic matrix $H$ and the bilinear matrix $N$ are omitted, the system will reduce to a linear 
 state-space system. 

There are many applications where the system can be represented by quadratic-bilinear models. These include flow problems in energy networks, VLSI circuit design, interaction of biological systems and chemical processes.
Also a large class of nonlinear systems including fractions, exponentials, logarithmic and power terms can be represented in the quadratic-bilinear form by using exact transformations \cite{morGu11}, which increases their application range even further. Often, the requirement of these applications is to construct large scale quadratic-bilinear models, which are computationally expensive to analyse, control or optimise. Model order reduction provides a remedy to this problem.

Model order reduction is a process to compute another quadratic-bilinear descriptor system of the form
\begin{equation}
\begin{aligned}\label{ros}
\hat{E}\dot{\hat{x}}(t)&=\hat{A}\hat{x}(t)+\hat{N}\hat{x}(t)u(t)+
\hat{H}\bigl(\hat{x}(t)\otimes \hat{x}(t)\bigr)+\hat{B}u(t),\\
\hat{y}(t)&=\hat{C}\hat{x}(t), \qquad x(0)=0,
\end{aligned} 
\end{equation}
which has the reduced order $\hat{n}\ll n$, and $\hat{y}(t)\approx y(t)$ for all admissible inputs $u(t)$.  Here, the reduced state vector $\hat{x}(t)\in \mathbb{R}^{\hat{n}}$ and the reduced system matrices are $\hat{E},\hat{A},\hat{N}\in \mathbb{R}^{\hat{n}\times \hat{n}}$,  $\hat{H}\in \mathbb{R}^{\hat{n}\times \hat{n}^{2}}$, $\hat{B}\in \mathbb{R}^{\hat{n}\times 1}$ and $\hat{C}\in \mathbb{R}^{1\times \hat{n}}$.

Projection is commonly used to construct the reduced-order system, where two matrices $V$ and $W$ are identified such that their columns span 
$\hat{n}$-di\-men\-sio\-nal subspaces $\mathcal{V}$ and $\mathcal{W}$, respectively. In particular, projection involves the following steps:
\begin{itemize}
\item approximate the state vector in $\mathcal{V}$ such that $x(t) \approx V \hat{x} (t)$;
\item ensure Petrov-Galerkin condition so that the residual 
$$
\qquad\quad r(t) =EV \dot{\hat{x}}(t)-\Big(AV \hat{x}(t) + NV \hat{x}(t)u(t) + H\bigl(V \hat{x}(t) \otimes V \hat{x}(t)\bigr) + Bu(t)\Big), 
$$
is orthogonal to $\mathcal{W}$, i.e., $W^Tr(t)=0$.
\end{itemize}
This leads to the reduced system matrices
\begin{equation}\label{red_eq}
\begin{aligned}
 \hat{E}&=W^{T}EV, & \quad \hat{A}&=W^{T}AV, & \quad\hat{H}&=W^{T}H(V\otimes V),\\
\hat{N}&=W^{T}NV, &  \quad \hat{B}&=W^{T}B,  & \quad\hat{C}&=CV.
\end{aligned}
\end{equation}
Clearly, the choice of $V$ and $W$ dictates the accuracy of the reduced system for a~given quadratic-bilinear system. In case of linear systems, the choice of $V$ and $W$ is linked to the transfer function of the system. That is, $V$ and $W$ can be selected such that some of the moments associated with the transfer function of the original and reduced systems are matched, c.f. \cite{morGri97}. However, in the quadratic-bilinear case, the $s$-domain representation involves a series of multivariate transfer functions. Here, the series grows with a new $s$-variable and the complete representation involves multiple $s$-variables, $s_1,s_2,\ldots$. The structure of these multivariate transfer functions becomes increasingly complex as the series grows. For simplification, often the concept of moment matching is restricted to the first two multivariate transfer functions. 

An approach using one-sided projection (i.e. $V=W$) with moment matching of the first two transfer functions of a~quadratic-bilinear system was discussed initially in \cite{morGu11}. An~extension to two-sided projection has been presented in \cite{morBenB15} that refines the quality of the approximations in term of accuracy. It is observed in  \cite{morAhmBJ16} that a~simplified and equivalent representation of the multivariate transfer functions can be identified, for which it is relatively easy to extend the moment matching concept to higher multivariate transfer functions. In all these approaches, the first two or higher multivariate transfer functions and their first derivatives are matched by the reduced-order system. Recently, a new approach has been proposed  in \cite{morYanJ18} that matches not only the first two moments but also higher-order moments of the first two multivariate transfer functions. Although the approach is similar to a simplified version of \cite{morBenB12a}, the method in \cite{morYanJ18} shows higher-order moment matching properties without the use of the matrices $N$ and $H$ in the construction of the projection matrices
$V$ and $W$. Recently in \cite{morBenCS19, xcao}, it has been observed that structured non-linear systems (bilinear and quadratic-bilinear) can be linked to parametric LTI systems and solved through optimisation methods. This new framework is interesting but the computational cost is often high. 

In this paper, we propose a new reduction technique that matches higher-order moments and utilises the simplified structure of the multivariate transfer functions proposed in \cite{morAhmBJ16}. The simplified structure allows us to extend the multi-moment matching concept to the first three transfer functions. Also the construction of the projection matrices $V$ and $W$ now requires the use of the matrices $N$ and $H$ 
from the nonlinear terms, and will result in more accurate reduced-order system. This is different from \cite{morBenB12a}, where multi-moment matching of the first two transfer functions is discussed. The approximation quality of the reduced-order system produced by the proposed method is compared to those obtained by existing moment matching approaches from \cite{morBenB15, morAhmBJ16, morYanJ18}. It is observed that the proposed approach provides reduced-order models with smaller approximation error.

The remaining part of the paper is organized as follows. In Section~\ref{sec:background}, we discuss the concept of multivariate moment matching for quadratic-bilinear systems by introducing different forms of multi-variate transfer functions and presenting some of the existing techniques from the literature for moment matching. In Section~\ref{sec:matching2}, the concept of generalized multi-moment matching has been utilized for regular transfer functions. Its extension to third regular transfer function will be discussed in Section~\ref{sec:matching3}. Finally, in Section~\ref{sec:comparison}, we draw a comparison with existing techniques using some benchmark examples and conclude our findings. 

\section{Background}
\label{sec:background}
In this section, we present different forms of the multivariate transfer functions, their relations and some of the existing moment matching techniques for model order reduction of quadratic-bilinear systems.

\subsection{Multivariate Frequency Representation} 

The input-output representation for the SISO quadratic-bilinear system \eqref{fom_qbdae} can be expressed by the Volterra series expansion of the output $y(t)$ with quantities analogous to the standard convolution operator
\begin{equation} \label{eq_tf_basic}
y(t)=\sum_{k=1}^{\infty}\int_{0}^{t}\int_{0}^{t_1}\int_{0}^{t_{k-1}}h_{k}(t_{1},...,t_{k})u(t-t_{1})...u(t-t_{k})dt_{k}...dt_{1},
\end{equation}
where the generalised impulse response $h_k$, also referred to as the $k$-dimensional kernel of the $k$-th subsystem, and the input $u$ are assumed to be one-sided, i.e., $h_{k}(t_{1},...,t_{k})=0$ for $t_{i}<0$, $i=0,...,k$, and  $u(t)=0$ for $t<0$. Applying the convolution property of the multi-variable Laplace transform to \eqref{eq_tf_basic}, we obtain
\begin{equation}\label{tf1}
Y_k(s_{1},...,s_{k})=H_{k}(s_{1},...,s_{k})U(s_{1})\cdots U(s_{k}),
\end{equation}
where $H_{k}(s_{1},...,s_{k})$ is the multivariate transfer function of the $k$-dimensional subsystem, see \cite{nonlinearRugh81} for details. Given the Laplace transform of the input $U(s)$ and the multivariate transfer function $H_{k}(s_{1},...,s_{k})$, the output of the $k$-th subsystem $y_{k}(t_{1},...,t_{k})$ can be identified through the inverse Laplace transform and the output 
of the system becomes 
\begin{equation}\label{tri_tf}
y(t)= \left. \sum_{k=1}^{\infty}y_{k}(t_{1},...,t_{k}) \right\vert_{t_{1}=\cdots=t_{k}=t}=\sum_{k=1}^{\infty}y_{k}(t,...,t).
\end{equation}
Note that the generalized impulse response can be written in different forms, for example by using change of variables, and therefore the multivariate transfer function has different forms. Three special forms are the symmetric, triangular and regular form of the multivariate transfer functions as examined in \cite{nonlinearRugh81}.
The relationship between the triangular form and the symmetric form of the $k$-dimensional transfer function can be written as
\begin{equation} \label{sym2tri}
H_{ksym}(s_{1},...,s_{k})=\frac{1}{n!}\sum_{\pi(\cdot)}H_{ktri}(s_{\pi(1)},...,s_{\pi(k)}),
\end{equation}
where the summation of $\pi(\cdot)$ denotes all $k!$ permutations of $s_{1},...,s_{k}$. 
The structure of the multivariate symmetric transfer functions can be identified by using the growing exponentials approach \cite{nonlinearRugh81}. To represent the structure of these transfer functions in compact form, we define the following matrix-valued functions
	\begin{align*}
X_{j}(s)=[(sE-A)^{-1}E]^{j}(sE-A)^{-1}, \qquad j= 0,1,2,\ldots\, .
\end{align*}
Their derivatives are given by 
\begin{equation} \label{lemma_eq1}
\frac{\rm d^\ell}{{\rm d} s^\ell}X_j(s)=(-1)^\ell\frac{(j+\ell)!}{j!}X_{j+\ell}(s), \qquad \ell=0,1,2,\ldots\, .
\end{equation}
With these notations, the first two subsystems of the quadratic-bilinear system \eqref{fom_qbdae} can be written as
\begin{align} \label{hsym}
\begin{split}
H_{1sym}(s_{1}) &=C(s_{1}E-A)^{-1}B=CX_{0}(s_{1})B,\\
H_{2sym}(s_{1},s_{2}) &=\frac{1}{2!} CX_{0}(s_{1}+s_{2})\Bigl(N\bigl(X_{0}(s_{1})B+X_{0}(s_{2})B\bigr)\\
&\quad + H\bigl(X_{0}(s_{1})B\otimes X_{0}(s_{2})B+X_{0}(s_{2})B\otimes X_{0}(s_{1})B\bigr)\Bigr).
\end{split}
\end{align}
The multivariate transfer function of the third subsystem will be discussed later. For higher subsystems, the structure involves further combinations of the multivariate functions, making the transfer function increasingly complex. An alternate form of the multivariate transfer function is the regular form which has relatively few terms \cite{morAhmBJ16} and is well-used for bilinear systems \cite{morBreD10}. The  $k$-dimensional regular form of the transfer function can be linked to the corresponding triangular form by using
\begin{equation}\label{tri2reg}
H_{ktri}(s_{1},\ldots,s_{k})=H_{kreg}(s_{1},s_{1}+s_{2},\ldots,s_{1}+s_{2}+\cdots+s_{k}).
\end{equation}
The relationships in \eqref{sym2tri} and \eqref{tri2reg} are utilized in \cite{morAhmBJ16} to identify the structure of the regular transfer functions for quadratic-bilinear systems.  The first two multivariate regular transfer functions are
\begin{align}\label{reg_tf123}
\begin{split}
H_{1reg}(s_{1}) =&CX_{0}(s_{1})B,
\\
H_{2reg}(s_{1},s_{2}) =&CX_{0}(s_{2})\Bigl(NX_{0}(s_{1})B+H\bigl(X_{0}(s_{2}-s_{1})B\otimes X_{0}(s_{1})B\bigr)\Bigr).
\end{split}
\end{align}
Clearly, the regular form has less terms compared to the symmetric form. 
\subsection{Moment matching model reduction}
The output $y(t)$ of the quadratic-bilinear system \eqref{fom_qbdae} can be well approximated by the output $\hat{y}(t)$ of the reduced-order system \eqref{ros}, if the model order reduction approach ensures 
\begin{align}\label{morg}
H_k(s_{1},...,s_{k})\simeq \hat{H}_k(s_{1},...,s_{k}), \qquad k=1,...,K.
\end{align}
In practice, this approximation is achieved by ensuring that the multivariate transfer functions of the original and reduced system are matched at some predefined interpolation points $\sigma_{ki}\in \mathbb{C}$,  $i=1,\ldots,r$, for each $k$. The approximation in \eqref{morg} improves further if some of the partial derivatives of $H_k(s_{1},...,s_{k})$ with respect to $s_j$, $j=1, \ldots , k$, are also matched by the corresponding partial derivatives of $\hat{H}_k(s_{1},\ldots,s_{k})$ at the same interpolation points. Based on the required level of approximation, different approximation problems have been addressed in the literature.  For example, the problem of Hermite interpolation can be defined as follows.

\begin{problem} \label{prbenner}
For a~set of interpolation points $\sigma_{i}\notin \Lambda(A,E)\cup\Lambda(\hat{A},\hat{E})$, $i=1,\ldots, m$, find projection matrices ${V, W\in\mathbb{R}^{n\times \hat{n}}}$ with $\hat{n}\ll n$ such that
\begin{align*}
H_1(\sigma_{i})  &= \hat{H}_1(\sigma_{i}), \\
\left.\frac{\partial}{\partial s_1} H_1(s_{1})\right|_{s_{1}=\sigma_{i}}  & =
\left.\frac{\partial}{\partial s_1} \hat{H}_1(s_{1})\right|_{s_{1}=\sigma_{i}}, \\
H_2(\sigma_{i}, \sigma_{i}) & = \hat{H}_2(\sigma_{i},\sigma_{i}), \\
\left.\frac{\partial}{\partial s_{j}} H_2(s_{1},s_{2})\right|_{
 \stackon{\scriptsize$s_{2}=\sigma_{i}$}{\scriptsize$s_{1}=\sigma_{i}$}}&
\left.=\frac{\partial}{\partial s_{j}} \hat{H}_2(s_{1},s_{2})\right|_{
 \stackon{\scriptsize$ s_{2}=\sigma_{i}$}{\scriptsize$s_{1}=\sigma_{i}$}}, \qquad j=1,2.
\end{align*}
\end{problem}

This problem has been addressed in \cite{morBenB15,morGu11}, where the symmetric form of the multivariate transfer functions has been utilised for ensuring interpolation. Essentially, the result in \cite{morBenB15} can be summarised as follows.

 \begin{theorem}\label{thbenner}
For a~given set of interpolation points $\sigma_{i}\notin \Lambda(A,E)\cup\Lambda(\hat{A},\hat{E})$, $i=1,\ldots, m$, let $V,W \in \mathbb{R}^{n\times \hat{n}}$ satisfy
\begin{align*}
\mbox{\rm im}(V)=& \starspan_{i=1,...,m}\Bigl\{X_{0}(\sigma_{i})B, X_{0}(\sigma_{i}) \bigl[(H\bigl(X_{0}(\sigma_{i})B\otimes X_{0}(\sigma_{i})B\bigr)+NX_{0}(\sigma_{i})B\bigr]\Bigr\},\\
\mbox{\rm im}(W)=& \starspan_{i=1,...,m}\Bigl\{X_{0}(2\sigma_{i})^TC^T,X_{0}(\sigma_{i})^T\bigl[H^{(2)}\bigl(X_{0}(\sigma_{i})B\otimes X_{0}(2\sigma_{i})^TC^T\bigr)+\frac{1}{2} N^{T}X_{0}(2\sigma_{i})^TC^T\bigr]\Bigr\},
\end{align*}
where $H^{(2)}$ is the mode-$2$ matricization of a $3$-dimensional tensor $\mathcal{H}\in \mathbb{R}^{n\times n \times n}$ for which $H=H^{(1)}$ is the mode-$1$ matricization.
Then the reduced-order system satisfies
\begin{equation}\label{prrbenner}
\begin{aligned}
H_{1sym}(\sigma_{i})&=\hat{H}_{1sym}(\sigma_{i}),&
H_{2sym}(\sigma_{i},\sigma_{i})&=\hat{H}_{2sym}(\sigma_{i},\sigma_{i}),\\
\left.\frac{\partial}{\partial s_{j}} H_{2sym}(s_1,s_2)\right|_{
 \stackon{\scriptsize$s_{2}=\sigma_{i}$}{\scriptsize$s_{1}=\sigma_{i}$}}&= \left.\frac{\partial}{\partial s_{j}} \hat{H}_{2sym}(s_1,s_2)\right|_{
 \stackon{\scriptsize$s_{2}=\sigma_{i}$}{\scriptsize$s_{1}=\sigma_{i}$}}, & j&=1,2.
\end{aligned}
\end{equation}
\end{theorem}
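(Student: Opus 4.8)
The plan is to reduce the claim to two elementary subspace–pullback identities for rational Krylov projections and then to track the terms produced by differentiating the second symmetric transfer function. First I would rewrite the reduced quantities in resolvent form: with $\hat X_j(s):=[(s\hat E-\hat A)^{-1}\hat E]^j(s\hat E-\hat A)^{-1}$ and the reduced matrices from \eqref{red_eq}, formula \eqref{hsym} gives $\hat H_{1sym}(s_1)=\hat C\hat X_0(s_1)\hat B$ and the analogous expression for $\hat H_{2sym}$; for the shifted-point expressions to be defined one also needs $2\sigma_i\notin\Lambda(A,E)\cup\Lambda(\hat A,\hat E)$, which I take to be implicit in the hypothesis. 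The two facts I would isolate are: \textbf{(L1)} if $X_0(\sigma)p\in\mathrm{im}(V)$ then $V\hat X_0(\sigma)W^{T}p=X_0(\sigma)p$ — writing $X_0(\sigma)p=Vz$, i.e. $(\sigma E-A)Vz=p$, and multiplying by $W^{T}$ gives $(\sigma\hat E-\hat A)z=W^{T}p$, hence $z=\hat X_0(\sigma)W^{T}p$; and \textbf{(L2)}, the transpose version: if $X_0(\mu)^{T}C^{T}\in\mathrm{im}(W)$ then $C X_0(\mu)=\hat C\hat X_0(\mu)W^{T}$. Both use only invertibility of $\sigma\hat E-\hat A$. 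In particular the first column block of $V$ yields $V\hat X_0(\sigma_i)\hat B=X_0(\sigma_i)B$ and the first column block of $W$ yields $C X_0(2\sigma_i)=\hat C\hat X_0(2\sigma_i)W^{T}$.

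The two value-matching identities then follow immediately. For $H_1$: $\hat H_{1sym}(\sigma_i)=\hat C\hat X_0(\sigma_i)\hat B=C(V\hat X_0(\sigma_i)\hat B)=C X_0(\sigma_i)B$. For $H_2$ on the diagonal: by (L1) together with $\hat N=W^{T}NV$ and $\hat H=W^{T}H(V\otimes V)$, the whole bracket inside $\hat H_{2sym}(\sigma_i,\sigma_i)$ equals $W^{T}w_i'$ with $w_i':=N X_0(\sigma_i)B+H(X_0(\sigma_i)B\otimes X_0(\sigma_i)B)$, and then (L2) gives $\hat H_{2sym}(\sigma_i,\sigma_i)=\hat C\hat X_0(2\sigma_i)W^{T}w_i'=C X_0(2\sigma_i)w_i'=H_{2sym}(\sigma_i,\sigma_i)$; observe $w_i'=(\sigma_iE-A)v_i$, where $v_i$ is the second column block of $V$.

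For the partial derivatives — the real content — I would use that $H_{2sym}$ is symmetric in $(s_1,s_2)$, so it suffices to handle $j=2$. Differentiating and applying \eqref{lemma_eq1} with $\ell=1$ produces, at $s_1=s_2=\sigma_i$, an ``outer'' contribution $-C X_1(2\sigma_i)w_i'$ coming from $\partial_{s_2}X_0(s_1+s_2)$ and an ``inner'' contribution $-\frac12 C X_0(2\sigma_i)(N X_1(\sigma_i)B+H(X_0(\sigma_i)B\otimes X_1(\sigma_i)B)+H(X_1(\sigma_i)B\otimes X_0(\sigma_i)B))$, with the analogous split for $\hat H_{2sym}$. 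The device for matching each piece is the sandwich $\hat X_1(\cdot)=\hat X_0(\cdot)\hat E\hat X_0(\cdot)$ with $\hat E=W^{T}EV$: a factor $\hat C\hat X_1(\mu)\hat q$ rewrites as $[\hat C\hat X_0(\mu)W^{T}]\,E\,[V\hat X_0(\mu)\hat q]$, which collapses to the full-order quantity once (L1) and (L2) can be applied to the inner and outer resolvent factors. Matching the outer term is where the second column block of $W$ enters, the mode-2 matricization $H^{(2)}$ being exactly what writes the adjoint of the quadratic contraction $z\mapsto z^{T}H(\cdot\otimes\cdot)$ as a vector which, together with $X_0(2\sigma_i)^{T}C^{T}\in\mathrm{im}(W)$, carries the first-derivative information on the output side; symmetrically the second column block of $V$ carries the derivative information needed for the inner $X_1(\sigma_i)$ terms after they are pushed through $\hat N$ and $\hat H$. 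Adding the two contributions should give $\partial_{s_j}\hat H_{2sym}(\sigma_i,\sigma_i)=\partial_{s_j}H_{2sym}(\sigma_i,\sigma_i)$ for $j=1,2$.

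The hard part will be precisely this derivative step: one must pair every differentiated monomial with the right column block of $V$ or $W$, keep the Kronecker/tensor bookkeeping straight (the interplay of $H^{(2)}$ with $H=H^{(1)}$ and the permutation terms of the symmetric form), track the $\frac12$ factors and the $\sigma_i$-versus-$2\sigma_i$ arguments inside the $\hat E=W^{T}EV$ sandwich, and — most delicately — verify that the two nonlinear column blocks genuinely deliver first-derivative (Hermite) information at $\sigma_i$ and not merely function values, i.e. that the resolvent arguments occurring in those blocks are the ones the sandwich identity demands. Once (L1) and (L2) are set up, the value-matching of $H_1$ and of $H_{2sym}(\sigma_i,\sigma_i)$ is routine.
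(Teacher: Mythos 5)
First, a point of reference: the paper does not actually prove Theorem~\ref{thbenner} --- it is quoted as background from \cite{morBenB15} --- so there is no in-paper proof to compare against; the closest in-paper analogues of your machinery are the discussion following Theorem~\ref{thtnf} and the proof of Theorem~\ref{theorem2}. Your two pullback lemmas (L1) and (L2) are correct and are exactly the standard device, and your derivation of the two value-matching identities $H_{1sym}(\sigma_i)=\hat H_{1sym}(\sigma_i)$ and $H_{2sym}(\sigma_i,\sigma_i)=\hat H_{2sym}(\sigma_i,\sigma_i)$ from them is complete and correct.

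The derivative matching --- the only non-routine part of the claim, and the whole point of the two-sided construction --- is left as a plan, and the plan as written would not go through because you have paired the differentiated terms with the wrong subspace blocks. Write $b_i=X_0(\sigma_i)B$, $c_i=X_0(2\sigma_i)^TC^T$, $w_i'=Nb_i+H(b_i\otimes b_i)$. The ``outer'' contribution $-CX_1(2\sigma_i)w_i'=-CX_0(2\sigma_i)E\,X_0(2\sigma_i)w_i'$ can only be matched by pulling the right-hand factor $X_0(2\sigma_i)w_i'$ back through $V$, i.e.\ it is the second column block of \emph{V} that must absorb it (and with its leading resolvent evaluated at $2\sigma_i$; the mechanism forces the prefix to be $X_0(2\sigma_i)$ rather than the $X_0(\sigma_i)$ printed in the statement, consistent with the original source --- as printed, the needed vector is not in $\mathrm{im}(V)$). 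It cannot be absorbed on the $W$ side, since that would require $X_1(2\sigma_i)^TC^T\in\mathrm{im}(W)$, which is not provided. Conversely, the ``inner'' terms containing $X_1(\sigma_i)B=X_0(\sigma_i)Eb_i$ are the ones that, after transposition and the matricization identity trading $C$ from the first tensor mode into the Kronecker argument, collapse onto $X_0(\sigma_i)^T\bigl[H^{(2)}(b_i\otimes c_i)+\tfrac12N^Tc_i\bigr]$, i.e.\ precisely the second column block of \emph{W} (the factor $\tfrac12$ arises because the $H$-term occurs twice, once per Kronecker ordering, while the $N$-term occurs once). Your sketch assigns the outer term to $W$'s nonlinear block and the inner terms to $V$'s nonlinear block --- the reverse --- and the point you yourself flag as delicate (that the resolvent arguments line up with what the sandwich identity demands) is exactly where the argument, as proposed, breaks. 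So the value-matching half is done, but the Hermite half of the theorem remains unproved.
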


Note that the approach in \cite{morBenB15} only partially solves Problem~\ref{prbenner} as it does not ensure that the reduced system matches the derivatives of the first transfer function. To resolve this, \cite{morAhmFB19} proposes a modified framework, where the reduced system also matches the derivative of the first transfer function. However, the issue of extending the framework to higher subsystems remains, as the structure of the symmetric transfer functions becomes increasingly complex for higher subsystems. 
To address this issue, the regular form of the multivariate transfer functions is used in \cite{morAhmBJ16} to solve Problem~\ref{prbenner}. 
Another possible extension of Problem~\ref{prbenner} is to match higher derivatives (higher-order moments) of the first two multivariate transfer functions, which is discussed in the following subsection.

\subsection{Higher-Order Moment Matching}

To achieve better reduced-order models for a~given set of interpolation points, the problem of generalised moment matching (where higher derivatives of the multivariate transfer functions are also matched) has been considered in the literature. 
Formally, this problem can be stated as follows.

\begin{problem}\label{prmulti}
For a~set of interpolation points $\sigma_{i}\notin \Lambda(A,E)\cup\Lambda(\hat{A},\hat{E})$, $i=1,\ldots, m$, 
find projection matrices ${V,W \in \mathbb{R}^{n\times \hat{n}}}$ such that 
\begin{align*}
\left.\frac{\partial^p}{\partial s_1^p} H_1(s_{1})\right|_{s_{1}=\sigma_{i}}  &=\left.\frac{\partial^p}{\partial s_1^p} \hat{H}_1(s_{1})\right|_{s_{1}=\sigma_{i}},\\
\left.\frac{\partial^{p+q}}{\partial s_{1}^{p}\partial s_{2}^{q}} H_2(s_{1},s_{2})\right|_{
 \stackon{\scriptsize$s_{2}=\sigma_{i}$}{\scriptsize$s_{1}=\sigma_{i}$}}
 &=\left.\frac{\partial^{p+q}}{\partial s_{1}^{p}\partial s_{2}^{q}} \hat{H}_2(s_{1},s_{2})\right|_{ \stackon{\scriptsize$s_{2}=\sigma_{i}$}{\scriptsize$s_{1}=\sigma_{i}$}},
\end{align*}
for  $p=0,\ldots,P$ and $q=0,\ldots,Q$. The upper limits $P$ and $Q$ represent the highest moments being matched. 
\end{problem}

It is easy to see that Problem~\ref{prmulti} reduces to Problem~\ref{prbenner} for  $P=Q=1$ and $p\neq q>0$. Problem~\ref{prmulti} has been addressed recently in \cite{morYanJ18}, where the first two symmetric transfer functions have been used to match higher derivatives. The main result of \cite{morYanJ18} is summarised in the following theorem.

\begin{theorem}\label{thtnf}
For a~given set of interpolation points $\sigma_{i}\notin \Lambda(A,E)\cup\Lambda(\hat{A},\hat{E})$, $i=1,\ldots, m$, let $V,W \in \mathbb{R}^{n\times \hat{n}}$ satisfy
\begin{align*}
\mbox{\rm im}(V)& \subset \starspan_{i=1,...,m}\bigl\{X_{0}(\sigma_{i})B, \ldots,X_{k_{v}}(\sigma_{i})B \bigr\},\\
\mbox{\rm im}(W)& \subset \starspan_{i=1,...,m}\bigl\{X_{0}(2\sigma_{i})^{T}C^{T},\ldots, X_{k_{w}}(2\sigma_{i})^{T}C^{T}\bigr\},
\end{align*}
where $k_{v}=\max(P,Q)$ and $k_{w}=P+Q$. Then the reduced-order system ensures 
\begin{align}\label{pprtnf}
\begin{split}
\left.\frac{\partial^{p}}{\partial s_{1}^{p}} 
H_{1sym}(s_{1})\right|_{s_{1}=\sigma_{i}}
&=\left.\frac{\partial^{p}}{\partial s_{1}^{p}}
\hat{H}_{1sym}(s_{1})\right|_{s_{1}=\sigma_{i}}, \quad\enskip p=0,\ldots, k_v,\\
\left.\frac{\partial^{p}}{\partial s_{1}^{p}}
H_{1sym}(s_{1})\right|_{s_{1}=2\sigma_i}
&=\left.\frac{\partial^{p}}{\partial s_{1}^{p}}
\hat{H}_{1sym}(s_{1})\right|_{s_{1}=2\sigma_{i}}, \quad p=0,\ldots, k_w,\\
\left. \frac{\partial^{p+q}}{\partial s_{1}^{p}\partial s_{2}^{q}}
H_{2sym}(s_{1},s_{2})\right|_{ \stackon{$\scriptstyle s_{2}=\sigma_{i}$}{$\scriptstyle s_{1}=\sigma_{i}$}}
&=\left.\frac{\partial^{p+q}}{\partial s_{1}^{p}\partial s_{2}^{q}}
\hat{H}_{2sym}(s_{1},s_{2})\right|_{ \stackon{$\scriptstyle s_{2}=\sigma_{i}$}{$\scriptstyle s_{1}=\sigma_{i}$}},\;\; \begin{array}{l} p=0,\ldots, P,\\ q=0,\ldots, Q.\end{array}
\end{split}
\end{align}
\end{theorem}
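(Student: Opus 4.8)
The plan is to reduce the whole statement to the classical rational--Krylov interpolation property of the linear resolvent $X_0(s)=(sE-A)^{-1}$ and then to handle $H_{2sym}$ by a single Leibniz expansion. Write $\hat X_j(s)=[(s\hat E-\hat A)^{-1}\hat E]^j(s\hat E-\hat A)^{-1}$, which obeys the same derivative rule \eqref{lemma_eq1} as $X_j$, and abbreviate $\mathcal V=\mathrm{im}(V)$, $\mathcal W=\mathrm{im}(W)$. By \eqref{lemma_eq1} the spanning hypotheses are equivalent to $\frac{d^j}{ds^j}[X_0(s)B]\big|_{s=\sigma_i}\in\mathcal V$ for $j=0,\dots,k_v$ and $\frac{d^j}{ds^j}[X_0(s)^TC^T]\big|_{s=2\sigma_i}\in\mathcal W$ for $j=0,\dots,k_w$, i.e. $X_j(\sigma_i)B\in\mathcal V$ and $X_j(2\sigma_i)^TC^T\in\mathcal W$ in the respective ranges. \textbf{Step 1 (linear interpolation identities).} I would first establish, exactly as in the standard moment-matching lemma \cite{morGri97}, that the Petrov--Galerkin choice \eqref{red_eq} forces
$$
X_j(\sigma_i)B=V\hat X_j(\sigma_i)\hat B \quad (j=0,\dots,k_v),\qquad
CX_j(2\sigma_i)=\hat C\hat X_j(2\sigma_i)W^T \quad (j=0,\dots,k_w).
$$
For $j=0$ this is the usual argument: write $X_0(\sigma_i)B=Vg\in\mathcal V$, and multiplying $(\sigma_i E-A)Vg=B$ by $W^T$ gives $g=\hat X_0(\sigma_i)\hat B$; using $X_{j+1}(s)B=(sE-A)^{-1}E\,X_j(s)B$ together with $X_{j+1}(\sigma_i)B\in\mathcal V$, an induction on $j$ raises the order up to $k_v$. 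The second identity is the transpose of the same argument applied to the dual triple $(E^T,A^T,C^T)$ reduced by $(W,V)$, which works because $X_j(s)^T=[(sE^T-A^T)^{-1}E^T]^j(sE^T-A^T)^{-1}$.

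\textbf{Step 2 ($H_{1sym}$).} Using $\hat C=CV$, $\hat B=W^TB$ from \eqref{red_eq} and \eqref{lemma_eq1} to write $\frac{\partial^p}{\partial s_1^p}H_{1sym}(s_1)\big|_{s_1=\sigma_i}=(-1)^pp!\,CX_p(\sigma_i)B$ and likewise for $\hat H_{1sym}$, the first identity of Step 1 yields the matching at $\sigma_i$ for $p=0,\dots,k_v$. Writing the scalar $H_{1sym}(s)$ as $B^TX_0(s)^TC^T$ and applying the second identity of Step 1 (together with $\hat B^T=B^TW$) yields the matching at $2\sigma_i$ for $p=0,\dots,k_w$.

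\textbf{Step 3 ($H_{2sym}$, the main step).} I would differentiate $H_{2sym}$ in \eqref{hsym} by the Leibniz rule. Each $s_1$- or $s_2$-derivative either hits the outer factor $X_0(s_1+s_2)$ --- where the chain rule and \eqref{lemma_eq1} turn $a$ derivatives in $s_1$ and $b$ in $s_2$ into $(-1)^{a+b}(a+b)!\,X_{a+b}(s_1+s_2)$ --- or hits one of the single-variable inner factors $X_0(s_1)B$, $X_0(s_2)B$, each of which survives only the derivatives in its own variable and becomes a scalar multiple of $X_e(\sigma_i)B$ with $e\le P$ (variable $s_1$) or $e\le Q$ (variable $s_2$). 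Evaluating at $s_1=s_2=\sigma_i$ and collecting, $\frac{\partial^{p+q}}{\partial s_1^p\partial s_2^q}H_{2sym}\big|_{(\sigma_i,\sigma_i)}$ becomes a finite combination, with purely combinatorial coefficients, of terms of the two shapes
$$
CX_{a+b}(2\sigma_i)\,N\,X_{c}(\sigma_i)B
\qquad\text{and}\qquad
CX_{a+b}(2\sigma_i)\,H\bigl(X_{c}(\sigma_i)B\otimes X_{d}(\sigma_i)B\bigr),
$$
where in \emph{every} term $a+b\le P+Q=k_w$ and $c,d\le\max(P,Q)=k_v$ (the extreme cases being all $p+q$ derivatives on the outer resolvent, and all $p\le P$ or all $q\le Q$ derivatives on one inner factor). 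The same expansion applied to $\hat H_{2sym}$ (which has the identical functional form with hats, since $\hat X_0$ obeys \eqref{lemma_eq1}) produces the identical combination with hats. For a fixed term I would then push $\hat X_c(\sigma_i)\hat B$, $\hat X_d(\sigma_i)\hat B$ through $\hat N=W^TNV$ and $\hat H=W^TH(V\otimes V)$ via Step 1 to obtain $W^TNX_c(\sigma_i)B$ and $W^TH\bigl(X_c(\sigma_i)B\otimes X_d(\sigma_i)B\bigr)$, and absorb the leftover $W^T$ using $\hat C\hat X_{a+b}(2\sigma_i)W^T=CX_{a+b}(2\sigma_i)$ (legitimate since $a+b\le k_w$). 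Hence each reduced term equals the corresponding full-order term, and summing over the expansion gives the asserted matching for $p=0,\dots,P$ and $q=0,\dots,Q$.

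\textbf{Expected main obstacle.} The only genuine work is the bookkeeping in Step 3: tracking, through the nested Leibniz expansion of the symmetrised bilinear and quadratic parts, how many derivatives can accumulate on the outer resolvent versus each inner factor, so as to certify the sharp bounds $a+b\le k_w$ and $c,d\le k_v$ that make the subspace conditions on $V$ and $W$ exactly sufficient (and no larger than necessary). Everything else follows mechanically from the Step 1 identities and the definitions \eqref{red_eq}.
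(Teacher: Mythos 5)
Your proposal is correct and follows essentially the same route as the paper: the interpolation identities of your Step 1 are exactly the relations \eqref{case1eqs} the paper singles out as the two crucial ingredients, and your Leibniz expansion of $H_{2sym}$ with the bounds $a+b\le k_w$ on the outer resolvent and $\le\max(P,Q)=k_v$ on the inner factors is the systematic version of the explicit derivative computations the paper carries out for the case $P=Q=1$ (deferring the general case to the cited reference). No substantive difference in method.
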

For this general form, the proof is given in \cite{morYanJ18}. It is important to note that the construction of $V$ and $W$ is independent of the matrices $N$ and $H$. To clarify this, 
we consider the simple case $P=Q=1$ in more detail.

\noindent \textbf{Case $P=Q=1$:} The result in Theorem~\ref{thtnf} for this special case requires
\begin{align*}
\mbox{\rm im}(V)= \starspan\bigl\{&X_{0}(\sigma_{1})B,\ldots ,X_{0}(\sigma_{m})B,X_{1}(\sigma_{1})B,\ldots,
 X_{1}(\sigma_{m})B\bigr\},\\
\mbox{\rm im}(W)= \starspan \bigl\{&X_{0}(2\sigma_{1})^{T}C^{T},\ldots ,X_{0}(2\sigma_{m})^{T}C^{T},X_{1}(2\sigma_{1})^{T}C^{T},\ldots, X_{1}(2\sigma_{m})^{T}C^{T},\\&
X_{2}(2\sigma_{1})^{T}C^{T}\ldots, X_{2}(2\sigma_{m})^{T}C^{T}\bigr\},
\end{align*}
for which the following holds
\begin{align}\label{prtnf}
\begin{split}
\left.\frac{\partial^{p}}{\partial s_{1}^{p}}
H_{1sym}(s_{1})\right|_{s_{1}=\sigma_i}
&=\left.\frac{\partial^{p}}{\partial s_{1}^{p}}
\hat{H}_{1sym}(s_{1})\right|_{s_{1}=\sigma_{i}}, \quad\; p=0,1,\\
\left.\frac{\partial^{p}}{\partial s_{1}^{p}}H_{1sym}(s_{1})\right|_{s_{1}=2\sigma_i}
&=\left.\frac{\partial^{p}}{\partial s_{1}^{p}}\hat{H}_{1sym}(s_{1})\right|_{s_{1}=2\sigma_{i}}, \quad p=0,1,2,\\
\left.\frac{\partial^{p+q}}{\partial s_{1}^{p}\partial s_{2}^{q}}H_{2sym}(s_{1},s_{2})\right|_{ \stackon{$\scriptstyle s_{2}=\sigma_{i}$}{$\scriptstyle s_{1}=\sigma_{i}$}}
&=\left.\frac{\partial^{p+q}}{\partial s_{1}^{p}\partial s_{2}^{q}}\hat{H}_{2sym}(s_{1},s_{2})\right|_{ \stackon{$\scriptstyle s_{2}=\sigma_{i}$}{$\scriptstyle s_{1}=\sigma_{i}$}}, \quad p,q=0,1.
\end{split}
\end{align}
The proof is based on two important concepts that are crucial for implicit moment matching via projection. 
For any vector $X_j(\sigma_i)B$ in the image of $V$ 
	and any vector $X_j(2\sigma_i)^TC^T$ in the image of $W$, we have
\begin{equation}\label{case1eqs}
	V\hat{X}_{j}(\sigma_{i})\hat{B}=X_{j}(\sigma_{i})B, \qquad 
	W\hat{X}_{j}(2\sigma_{i})^T\hat{C}^T=X_{j}(2\sigma_{i})^TC^T.
\end{equation} 
Using these relations, it is easy to show that the first two equations in \eqref{prtnf} hold for $p=0$ and $q=0$. To see that these equations are also satisfied for the derivatives, we determine the first and second derivatives of $H_{1sym}(s_1)$.  They are given by
\begin{equation}
\begin{aligned}
\frac{\partial}{\partial s_1}H_{1sym}(s_1)&= -C(s_1E-A)^{-1}E(s_1E-A)^{-1}B = -CX_1(s_1)B,\\
\frac{\partial^2}{\partial s_1^2}H_{1sym}(s_1)&= 2C[(s_1E-A)^{-1}E]^{-2}(s_1E-A)^{-1}B = 2CX_2(s_1)B.
\end{aligned}
\end{equation}
Now premultiplying the first equation in \eqref{case1eqs} with $j=1$ by $C$ and the second equation in  \eqref{case1eqs} with $j=1,2$ by $B^T$, we observe that the derivatives of $H_{1sym}(s_1)$ and $\hat{H}_{1sym}(s_1)$ are also matched at $s_1=\sigma_i$ and $s_1=2\sigma_i$. 

For the third equation in \eqref{prtnf}, we have four different expressions depending on the values of $p$ and $q$. 
It follows from \eqref{hsym} that
\begin{equation*}
H_{2sym}(\sigma_{i},\sigma_{i}) = C X_{0}(2\sigma_{i})\Bigl(NX_{0}(\sigma_{i})B+ H\bigl(X_{0}(\sigma_{i})B\otimes X_{0}(\sigma_{i})B\bigr)\Bigr).
\end{equation*}
We can again use \eqref{case1eqs} to show that the third equation in \eqref{prtnf} holds for $p=q=0$. 
The partial derivative of $H_{2sym}(s_1,s_2)$ with respect to $s_1$ has the form
\begin{equation*} \label{case1partial}
\begin{aligned}
\frac{\partial}{\partial s_1}H_{2sym}(s_1,s_2) =&-\frac{1}{2!}\bigg[CX_{0}(s_1+s_2)\Bigl(NX_{1}(s_1)B+ H\bigl(X_{1}(s_1)B\otimes X_{0}(s_2)B  +X_{0}(s_2)B\otimes X_{1}(s_1)B\bigr)\Bigr) \\ &+ CX_{1}(s_1+s_2)\Bigl(N\bigl(X_{0}(s_1)B+X_0(s_2)B\bigr)  +H\bigl(X_{0}(s_1)B\otimes X_{0}(s_2)B \\ &+ X_{0}(s_2)B\otimes X_{0}(s_1)B\bigr) \Bigr)\bigg]. 
\end{aligned}
\end{equation*}
Then using \eqref{case1eqs} with $j=0,1$, we show that the third equation in \eqref{prtnf} holds for $p=1$ and $q=0$. Similarly, this equation can be proved for $p=0$ and $q=1$. 
For the final case $p=q=1$, we determine the partial derivative  
\begin{equation*} \label{case1partial2}
\small
\begin{aligned}
\frac{\partial^2}{\partial s_1\partial s_2}H_{2sym}(s_1,s_2) =&\frac{1}{2!}\bigg[CX_{0}(s_1\!+\!s_2)\Bigl(H\bigl(X_{1}(s_1)B\otimes X_{1}(s_2)B\! +\! X_{1}(s_2)B\otimes X_{1}(s_1)B\bigr)\Bigr)\\ 
& +CX_{1}(s_1+s_2)\Bigl(NX_{1}(s_2)B+ H\bigl(X_{0}(s_1)B\otimes X_{1}(s_2)B + X_{1}(s_2)B\otimes X_{0}(s_1)B\bigr)\Bigr)\\ 
& +CX_{1}(s_1+s_2)\Bigl(N X_{1}(s_1)B + H\bigl(X_{1}(s_1)B\otimes X_{0}(s_2)B + X_{0}(s_2)B\otimes X_{1}(s_1)B\bigr)\Bigr)\\ 
& +2CX_{2}(s_1+s_2)\Bigl(N\bigl(X_{0}(s_1)B+X_0(s_2)B\bigr)  + H\bigl(X_{0}(s_1)B\otimes X_{0}(s_2)B + X_{0}(s_2)B\otimes X_{0}(s_1)B\bigr) \Bigr)\bigg]. 
\end{aligned}
\end{equation*}
Using \eqref{case1eqs} with $j=0,1,2$, we see that the required multi-moment matching condition holds. Thus, \eqref{prtnf} holds for the specific choice of $V$ and $W$.

The result in Theorem~\ref{thtnf} utilises the symmetric form of the multivariate transfer functions, which is known to be complex especially for higher subsystems. 
In this paper, we propose a~multi-moment matching technique that utilises the regular form of the multivariate transfer functions \eqref{reg_tf123} and is, therefore, easy to be extended to higher multi-moments.

\section{Multi-Moment Matching for the First Two Regular Transfer Functions}
\label{sec:matching2}

In this section, we show how the choice of $V$ and $W$ can ensure generalised multi-moment matching implicitly. We begin with the concept of moment matching for the first two regular transfer functions of the quadratic-bilinear system \eqref{fom_qbdae} and later extend it to the first three regular transfer functions.
To solve Problem~\ref{prmulti} using the first two regular transfer functions given in \eqref{reg_tf123}, we first partition $H_{2reg}(s_{1},s_{2})$ additively into two parts
\begin{equation}\label{eq:H2reg}
H_{2reg}(s_{1},s_{2})=C\big(Z_{21}(s_{1},s_{2})+Z_{22}(s_{1},s_{2})\big),
\end{equation}
where
\begin{align}\label{I1I2}
\begin{split}
Z_{21}(s_{1},s_{2})&=X_{0}(s_{2})NX_{0}(s_{1})B,\\
Z_{22}(s_{1},s_{2})&=X_{0}(s_{2})H \bigl( X_{0}(s_{2}-s_{1})B\otimes X_{0}(s_{1})B \bigr).
\end{split}
\end{align}
With these notations, we can prove the following lemma.

\begin{lemma}\label{lemmapartial}
Let $Z_{21}(s_{1},s_{2})$ and $Z_{22}(s_{1},s_{2})$ be as defined in \eqref{I1I2}, 
then the partial derivatives of $H_{2reg}(s_{1},s_{2})$ can be written as 
$$
\frac{\partial^{p+q}}{\partial s_{1}^{p}\partial s_{2}^{q}}H_{2reg}(s_{1},s_{2})=C\bigg(\frac{\partial^{p+q}}{\partial s_{1}^{p}\partial s_{2}^{q}}Z_{21}(s_{1},s_{2})+\frac{\partial^{p+q}}{\partial s_{1}^{p}\partial s_{2}^{q}}Z_{22}(s_{1},s_{2})\bigg),
$$
where
\begin{align}
\frac{\partial^{p+q}}{\partial s_{1}^{p}\partial s_{2}^{q}}Z_{21}(s_{1},s_{2})&=(-1)^{p+q}p!q!X_{q}(s_{2}) N X_{p}(s_{1})B,\label{eq:Z1}\\
\frac{\partial^{p+q}}{\partial s_{1}^{p}\partial s_{2}^{q}}Z_{22}(s_{1},s_{2})&= \sum_{j=0}^{q}\begin{pmatrix}q\\ j\end{pmatrix}(q-j)!\, X_{q-j}(s_{2})   
H \Bigl( \sum_{k=0}^{p}(-1)^{p+q-k}\label{eq:Z2}\\ & \times \begin{pmatrix}p\\ k\end{pmatrix} (k+j)!X_{k+j}(s_{2}-s_{1})B \otimes (p-k)! X_{p-k}(s_{1})B \Bigr). \nonumber
\end{align}
\end{lemma}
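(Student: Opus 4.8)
The plan is to prove the two formulas in Lemma~\ref{lemmapartial} by direct differentiation, relying on the derivative rule \eqref{lemma_eq1} for the matrix-valued functions $X_j(s)$ and on the linearity of $N$ and $H$. Since $H_{2reg}(s_1,s_2)=C(Z_{21}+Z_{22})$ by \eqref{eq:H2reg}, and $C$ is a constant matrix, it suffices to differentiate $Z_{21}$ and $Z_{22}$ termwise; the displayed identity for $\partial^{p+q}H_{2reg}/\partial s_1^p\partial s_2^q$ then follows immediately.

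For \eqref{eq:Z1}, I would observe that in $Z_{21}(s_1,s_2)=X_0(s_2)NX_0(s_1)B$ the two $X_0$ factors depend on disjoint variables, so the mixed partial derivative factorizes: differentiating $p$ times in $s_1$ acts only on $X_0(s_1)$, and $q$ times in $s_2$ acts only on $X_0(s_2)$. Applying \eqref{lemma_eq1} with $j=0$ gives $\frac{\rm d^p}{{\rm d}s_1^p}X_0(s_1)=(-1)^p p!\,X_p(s_1)$ and likewise $\frac{\rm d^q}{{\rm d}s_2^q}X_0(s_2)=(-1)^q q!\,X_q(s_2)$, and pulling the scalars out past $N$ yields $(-1)^{p+q}p!q!\,X_q(s_2)NX_p(s_1)B$, which is exactly \eqref{eq:Z1}.

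For \eqref{eq:Z2} the calculation is more involved because the second argument $s_2-s_1$ of the inner $X_0$ factor couples the two differentiations. The approach is to proceed in two stages, exploiting that $H$ is linear (bilinear in its two slots). First differentiate $q$ times with respect to $s_2$: the outer factor $X_0(s_2)$ and the inner factor $X_0(s_2-s_1)$ both depend on $s_2$, so the Leibniz rule produces the sum over $j=0,\ldots,q$ with binomial coefficient $\binom{q}{j}$, distributing $q-j$ derivatives onto $X_0(s_2)$ (giving $(-1)^{q-j}(q-j)!\,X_{q-j}(s_2)$, with the sign absorbed into the eventual total sign) and $j$ derivatives onto the first slot $X_0(s_2-s_1)$ (giving, via the chain rule in $s_2$, a factor $X_j(s_2-s_1)$ up to sign and factorial). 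Then differentiate $p$ times with respect to $s_1$: only $X_0(s_2-s_1)$ (now already carrying $j$ derivatives, i.e.\ proportional to $X_j(s_2-s_1)$) and $X_0(s_1)B$ depend on $s_1$, so a second Leibniz expansion over $k=0,\ldots,p$ with $\binom{p}{k}$ splits the derivatives between the $X_{j}(s_2-s_1)$ factor — each $s_1$-derivative contributing a $(-1)$ from the chain rule and bumping the index, landing on $(k+j)!\,X_{k+j}(s_2-s_1)$ — and the $X_0(s_1)B$ factor, yielding $(p-k)!\,X_{p-k}(s_1)B$. Collecting all the sign factors from the chain rules and from \eqref{lemma_eq1} gives the overall $(-1)^{p+q-k}$ in the formula, and bilinearity of $H$ lets the two outer sums pass through its slots, reproducing \eqref{eq:Z2}.

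The main obstacle is bookkeeping rather than conceptual: correctly tracking the signs and the factorial/binomial factors that arise when the Leibniz rule is combined with the chain rule applied to $X_0(s_2-s_1)$ (where $\partial/\partial s_1$ introduces a sign but $\partial/\partial s_2$ does not), and verifying that after repeated differentiation the index shifts in \eqref{lemma_eq1} add up so that $\ell$ derivatives on $X_m$ produce $X_{m+\ell}$ with coefficient $(-1)^\ell (m+\ell)!/m!$. I would handle this cleanly by first recording the single-variable facts $\frac{\rm d^\ell}{{\rm d}s^\ell}\big(X_m(s)v\big)=(-1)^\ell\frac{(m+\ell)!}{m!}X_{m+\ell}(s)v$ and $\frac{\rm d^\ell}{{\rm d}s_1^\ell}\big(X_m(s_2-s_1)v\big)=\frac{(m+\ell)!}{m!}X_{m+\ell}(s_2-s_1)v$ as auxiliary identities, then applying the Leibniz rule twice and substituting. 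A light sanity check against the already-known first derivatives of $H_{2sym}$ in the $P=Q=1$ discussion (after accounting for the different — symmetric versus regular — form) confirms the signs.
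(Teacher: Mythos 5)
Your proposal is correct and follows essentially the same route as the paper: \eqref{eq:Z1} by termwise use of \eqref{lemma_eq1} on the two decoupled $X_0$ factors, and \eqref{eq:Z2} by two successive applications of the Leibniz rule combined with \eqref{lemma_eq1} and the chain rule for $X_0(s_2-s_1)$, with all signs and factorial factors tracked correctly. The only (immaterial) difference is that you differentiate first in $s_2$ and then in $s_1$, whereas the paper proceeds in the opposite order.
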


\begin{proof}  
Equation \eqref{eq:Z1} immediately follows from \eqref{lemma_eq1} with $j=0$ and $\ell=p,q$. 
In order to prove \eqref{eq:Z2}, we first observe that
$$
\frac{\partial^{p} }{\partial s_{1}^p}Z_{22}(s_{1},s_{2}) = 
X_{0}(s_{2})H \Bigl(\sum_{k=0}^p (-1)^{p-k} \begin{pmatrix}p\\ k\end{pmatrix} i!\,X_{k}(s_{2}-s_{1})B \otimes (p-k)! X_{p-k}(s_{1})B\Bigr).
$$
This relation is obtain by using the Leibniz product rule for the $p$-th partial deri\-vative with respect to $s_1$ combined with \eqref{lemma_eq1} with $j=0$ and $\ell=k,p-k$. Applying again the Leibniz product rule for the $q$-th partial derivative with respect to $s_2$ to the above equation, we get equation~\eqref{eq:Z2}. 		

\end{proof}


Next, we show how a reduced system can be identified that can match the multi-moments in Lemma~\ref{lemmapartial} for the first two regular transfer functions without using the system matrices $N$ and $H$. 

\begin{theorem} \label{theorem2}
For a~given set of interpolation points $\sigma_{i}\notin \Lambda(A,E)\cup\Lambda(\hat{A},\hat{E})$, 
$i=1,\ldots, m$, let $V=V(\sigma_i,P_1)$, ${W=W(\sigma_i,Q) \in \mathbb{R}^{n\times \hat{n}}}$ satisfy
\begin{align*}
&\mbox{\rm im}(V)= \starspan\bigl\{X_{0}(\sigma_{1})B,\ldots ,X_{0}(\sigma_{m})B,\ldots,X_{P_1}(\sigma_{1})B,\ldots, X_{P_1}(\sigma_{m})B\bigr\},\\
&\mbox{\rm im}(W)= \starspan\bigl\{\!X_{0}(2\sigma_{1})^{T}\!C^{T}\!,\ldots\!,X_{0}(2\sigma_{m})^{T}\!C^{T},\!\ldots,\!X_{Q}(2\sigma_{1})^{T}\!C^{T}\!,\ldots, X_{Q}(2\sigma_{m})^{T}\!C^{T}\! \bigr\}
\end{align*}
with $P_1=P+Q$. Then the reduced-order system satisfies
\begin{align}\label{prprop}
\begin{split}
\left.\frac{\partial^{p}}{\partial s_{1}^{p}} H_{1reg}(s_{1})\right|_{s_{1}=\sigma_{i}}
&=\left.\frac{\partial^{p}}{\partial s_{1}^{p}}\hat{H}_{1reg}(s_{1})\right|_{s_{1}=\sigma_{i}}, \qquad p=0\ldots,P+Q,\\
\left.\frac{\partial^{q}}{\partial s_{1}^{q}}H_{1reg}(s_{1})\right|_{s_{1}=2\sigma_i}
&=\left.\frac{\partial^{q}}{\partial s_{1}^{q}}\hat{H}_{1reg}(s_{1})\right|_{s_{1}=2\sigma_{i}}, \qquad q=0,\ldots, \max(P,Q),\\
\left.\frac{\partial^{p+q}}{\partial s_{1}^{p}\partial s_{2}^{q}}H_{2reg}(s_{1},s_{2})\right|_{ \stackon{$\scriptstyle s_{2}=2\sigma_{i}$}{$\scriptstyle s_{1}=\sigma_{i}\enskip$}}
&=\left.\frac{\partial^{p+q}}{\partial s_{1}^{p}\partial s_{2}^{q}}\hat{H}_{2reg}(s_{1},s_{2})\right|_{ \stackon{$\scriptstyle s_{2}=2\sigma_{i}$}{$\scriptstyle s_{1}=\sigma_{i}\enskip$}}, \quad \begin{array}{l} p=0,\ldots, P, \\ q=0,\ldots, Q.\end{array}
\end{split}
\end{align}
\end{theorem}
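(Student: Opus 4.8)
The whole argument would be built on the regular-form analogue of the implicit-projection identities \eqref{case1eqs}. With $\hat X_j(s)=[(s\hat E-\hat A)^{-1}\hat E]^{j}(s\hat E-\hat A)^{-1}$ denoting the reduced counterparts of the $X_j$ and the reduced matrices taken from \eqref{red_eq}, the first step is to prove the two families of identities $V\hat X_j(\sigma_i)\hat B=X_j(\sigma_i)B$ for $j=0,\ldots,P+Q$, and $W\hat X_j(2\sigma_i)^{T}\hat C^{T}=X_j(2\sigma_i)^{T}C^{T}$ for $j=0,\ldots,Q$, for every $i$. Each follows by induction on $j$: for $j=0$ one writes $X_0(\sigma_i)B=V\hat v$ (legitimate because this vector lies in $\mathrm{im}(V)$), uses $(\sigma_iE-A)V\hat v=B$, premultiplies by $W^{T}$ to obtain $(\sigma_i\hat E-\hat A)\hat v=\hat B$, hence $\hat v=\hat X_0(\sigma_i)\hat B$; the inductive step uses $X_j(\sigma_i)B=(\sigma_iE-A)^{-1}EX_{j-1}(\sigma_i)B\in\mathrm{im}(V)$ together with the induction hypothesis $EX_{j-1}(\sigma_i)B=EV\hat X_{j-1}(\sigma_i)\hat B$. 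The $W$-side identities are obtained by the same computation applied to the transposed system, premultiplying by $V^{T}$ instead. The usual genericity hypotheses ($V,W$ of full column rank, $\hat E$ invertible, and $2\sigma_i\notin\Lambda(A,E)\cup\Lambda(\hat A,\hat E)$) are needed here and should be stated explicitly.

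Second, the first transfer function is handled directly. By \eqref{lemma_eq1}, $\partial_{s_1}^{p}H_{1reg}(s_1)=(-1)^{p}p!\,CX_p(s_1)B$ and, since $\hat C=CV$, $\partial_{s_1}^{p}\hat H_{1reg}(s_1)=(-1)^{p}p!\,CV\hat X_p(s_1)\hat B$. Evaluating at $s_1=\sigma_i$ and applying the first family of identities for $p\le P+Q$ gives the first line of \eqref{prprop}. For the second line, I would evaluate at $s_1=2\sigma_i$, use $\hat B=W^{T}B$ to rewrite $\hat C\hat X_q(2\sigma_i)\hat B=\bigl(W\hat X_q(2\sigma_i)^{T}\hat C^{T}\bigr)^{T}B$, and apply the $W$-side identities on the admissible range of $q$.

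Third --- and this is the heart of the proof --- I would pass to $H_{2reg}=C(Z_{21}+Z_{22})$ and to its reduced analogue $\hat H_{2reg}=\hat C(\hat Z_{21}+\hat Z_{22})$, where $\hat Z_{21},\hat Z_{22}$ are given by \eqref{I1I2} with every symbol replaced by its hatted version. Since Lemma~\ref{lemmapartial} applies verbatim to the reduced system, it is enough to match the explicit formulas \eqref{eq:Z1}--\eqref{eq:Z2} at $(s_1,s_2)=(\sigma_i,2\sigma_i)$; the decisive observation is that there $s_2-s_1=\sigma_i$, so every $X$-factor is evaluated at $\sigma_i$ or $2\sigma_i$. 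For $Z_{21}$ one obtains $CX_q(2\sigma_i)NX_p(\sigma_i)B$; rewriting $C\,X_q(2\sigma_i)=\bigl(X_q(2\sigma_i)^{T}C^{T}\bigr)^{T}$, substituting $X_q(2\sigma_i)^{T}C^{T}=W\hat X_q(2\sigma_i)^{T}\hat C^{T}$ (valid since $q\le Q$) and $X_p(\sigma_i)B=V\hat X_p(\sigma_i)\hat B$ (valid since $p\le P\le P+Q$), and collapsing $W^{T}NV=\hat N$, produces exactly $\hat C\hat X_q(2\sigma_i)\hat N\hat X_p(\sigma_i)\hat B$, the matching term of $\hat Z_{21}$. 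For $Z_{22}$ the generic summand contains $X_{q-j}(2\sigma_i)\,H\bigl(X_{k+j}(\sigma_i)B\otimes X_{p-k}(\sigma_i)B\bigr)$ with $0\le j\le q$ and $0\le k\le p$; here $q-j\le Q$, $p-k\le P\le P+Q$, and --- the index bound that really matters --- $k+j\le p+q\le P+Q$, which is exactly why $V$ is forced to carry moments up to order $P_1=P+Q$. Substituting the three Krylov identities and collapsing $W^{T}H(V\otimes V)=\hat H$ turns the summand into the corresponding summand of $\hat Z_{22}$; summing over $j$ and $k$ and premultiplying by $C$ yields the third line of \eqref{prprop}.

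I expect the main obstacle to be purely bookkeeping: lining up the two Leibniz expansions of Lemma~\ref{lemmapartial} term by term and checking that each $X$-vector produced genuinely lies in $\mathrm{im}(V)$ or $\mathrm{im}(W)$, i.e.\ verifying the ranges $q-j\le Q$, $p-k\le P$, and especially $k+j\le P+Q$. This last inequality is the structural reason why the regular form needs $V$ up to order $P+Q$, whereas the symmetric form of Theorem~\ref{thtnf} instead places order $P+Q$ on the $W$-side through the factor $X_{p+q}(s_1+s_2)$. A secondary point to keep in mind is that all identities evaluated at $2\sigma_i$ require $2\sigma_i$ to be a regular point of both the original and the reduced pencils.
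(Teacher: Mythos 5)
Your proposal follows essentially the same route as the paper: the paper likewise reduces the first two lines of \eqref{prprop} to the argument of Theorem~\ref{thtnf} via the projection identities \eqref{case1eqs}, and proves the third line by splitting $H_{2reg}=C(Z_{21}+Z_{22})$, applying Lemma~\ref{lemmapartial} at $(s_1,s_2)=(\sigma_i,2\sigma_i)$ (where $s_2-s_1=\sigma_i$), substituting the Krylov identities, and collapsing $W^TNV=\hat N$ and $W^TH(V\otimes V)=\hat H$ term by term. Your additional care with the inductive proof of the identities and with the index bound $k+j\le P+Q$ (the reason $V$ must carry moments up to $P_1=P+Q$) is consistent with, and somewhat more explicit than, what the paper writes.
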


\begin{proof}
Since $H_{1reg}(s_1)=H_{1sym}(s_1)$, the proof for the first two equations in \eqref{prprop} is similar to Theorem~\ref{thtnf}. In order to show the third equation in \eqref{prprop}, we consider the additive decomposition of $H_{2reg}(s_1,s_2)$ in \eqref{eq:H2reg}. Using Lemma~\ref{lemmapartial} and \eqref{case1eqs}, we have
\begin{align*}
\left.C\,\frac{\partial^{p+q}}{\partial s_{1}^{p}\partial s_{2}^{q}}Z_{21}(s_{1},s_{2})\right|_{ \stackon{$\scriptstyle s_{2}=2\sigma_{i}$}{$\scriptstyle s_{1}=\sigma_{i}\enskip$}} 
&=(-1)^{p+q}p!q!CX_{q}(2\sigma_{i}) NX_{p}(\sigma_{i})B\\
&=(-1)^{p+q}p!q!\hat{C}\hat{X}_{q}(2\sigma_{i}) W^{T}NV\hat{X}_{p}(\sigma_{i})\hat{B}\\
&=(-1)^{p+q}p!q!\hat{C}\hat{X}_{q}(2\sigma_{i})\hat{N}\hat{X}_{p}(\sigma_{i})\hat{B}\\ 
&=\left.\hat{C}\,\frac{\partial^{p+q}}{\partial s_{1}^{p}\partial s_{2}^{q}}\hat{Z}_{21}(s_{1},s_{2})\right|_{ \stackon{$\scriptstyle s_{2}=2\sigma_{i}$}{$\scriptstyle s_{1}=\sigma_{i}\enskip$}}
\end{align*}
and
\begin{align*}
\left.C\,\frac{\partial^{p+q}}{\partial s_{1}^{p}\partial s_{2}^{q}}Z_{22}(s_{1},s_{2})\right|_{ \stackon{$\scriptstyle s_{2}=2\sigma_{i}$}{$\scriptstyle s_{1}=\sigma_{i}\enskip$}}  
	&= C\,\sum_{j=0}^{q} \begin{pmatrix} q\\ j\end{pmatrix}(q-j)!\, X_{q-j}(2\sigma_{i})  \sum_{k=0}^{p} (-1)^{p+q-k}
 \begin{pmatrix} p\\ k\end{pmatrix} \\
& \qquad \times  H\bigl((k+j)!X_{k+j}(\sigma_{i})B\otimes (p-k)! X_{p-k}(\sigma_{i})B \bigr)\\
& =\hat{C}\,\sum_{j=0}^{q}  \begin{pmatrix} q\\ j\end{pmatrix} (q-j)!\, \hat{X}_{q-j}(2\sigma_{i})W^{T} 
\sum_{k=0}^{p} (-1)^{p+q-k} \begin{pmatrix} p\\ k\end{pmatrix}\\ 
& \qquad \times H\bigl((k+j)!V\hat{X}_{k+j}(\sigma_{i})\hat{B}\otimes (p-k)! V\hat{X}_{p-k}(\sigma_{i})\hat{B} \bigr)\\
& = \hat{C}\,\sum_{j=0}^{q}  \begin{pmatrix} q\\ j\end{pmatrix} (q-j)!\,\hat{X}_{q-j}(2\sigma_{i}) 
\sum_{k=0}^{p} (-1)^{p+q-k}  \begin{pmatrix} p\\ k\end{pmatrix}\\  
& \qquad \times \hat{H}\bigl((k+j)!\hat{X}_{k+j}(\sigma_{i})\hat{B}\otimes (p-k)! \hat{X}_{p-k}(\sigma_{i})\hat{B} \bigr)\\  
& = \left.\hat{C}\,\frac{\partial^{p+q}}{\partial s_{1}^{p}\partial s_{2}^{q}}\hat{Z}_{22}(s_{1},s_{2})\right|_{ \stackon{$\scriptstyle s_{2}=2\sigma_{i}$}{$\scriptstyle s_{1}=\sigma_{i}\enskip$}}.
\end{align*}
Combining these equalities, we get the third relation in \eqref{prprop}.
\end{proof}

Based on Theorem~\ref{theorem2},  we propose the multi-moment matching method for model reduction of the quadratic-bilinear system \eqref{fom_qbdae} as presented in Algorithm~\ref{alg:1}.
 
\begin{algorithm}[h] 
\SetAlgoLined
\KwIn{$E, A, N,H\in\mathbb{R}^{n\times n}$, $B, C^T\in\mathbb{R}^n$, $\sigma_{i} \in \mathbb{C}$ for 
$i=1,\ldots,m$, $P, Q\in\mathbb{N}$.}
\KwOut{$\hat{E}, \hat{A},\hat{N}, \hat{H}\in\mathbb{R}^{\hat{n}\times \hat{n}}$, $\hat{B}, \hat{C}^T\in\mathbb{R}^{\hat{n}}$.}
$V=[~]$; $W=[~]$;\\
\For{$j=0 : P+Q $}{
\For{$i=1 : m$}{$V=[V,~ X_{j}(\sigma_{i})B]$}
}
\For{$j=0 : Q$}{
\For{$i=1 : m$}{$W=[W,~ X_{j}(2\sigma_{i})^{T}C^T]$}
  }
$U=orth([V,W])$, \\
Construct the reduced-order matrices 
\begin{align*}
\hat{E}&=U^{T}EU,& \hat{A}&=U^{T}AU,& \hat{B}&=U^{T}B,&\\
\hat{C}&=CU,& \hat{N}&=U^{T}NU,& \hat{H}&=U^{T}H(U\otimes U).
\end{align*}
\caption{Multi-moment matching using first two regular transfer functions}
\label{alg:1}
\end{algorithm}

\section{Multi-Moment Matching for the First Three Regular Transfer Functions}
\label{sec:matching3}

In this section, we extend the concept of higher-order moment matching to the third regular transfer function $H_{3reg}(s_1,s_2,s_3)$. This is possible because the structure of the third regular subsystem is relatively simple compared to the corresponding symmetric form.  Before proceeding to the third regular transfer function, we introduce a~new function 
$$
Z(s_{1},s_{2})= NX_{0}(s_{1})B+H ( X_{0}(s_{2}-s_{1})B\otimes X_{0}(s_{1})B),
$$ 
which implies
$$
X_0(s_2)Z(s_{1},s_{2})=Z_{21}(s_{1},s_{2})+ Z_{22}(s_{1},s_{2}).
$$
Then the third regular transfer functions can be written as  
\begin{align*}
\begin{split}
H_{3reg}(s_{1},s_{2},s_{3}) &=
CX_{0}(s_{3})\Bigl(NX_0(s_2)Z(s_{1},s_{2}) +H\bigl(X_0(s_{3}-s_{2})B \otimes X_0(s_2)Z(s_{1},s_{2})\\
& \quad +X_0(s_{3}-s_{1})Z(s_{2}-s_{1},s_{3}-s_{1})\otimes X_0(s_{1})B\bigr)\Bigr).
\end{split}
\end{align*}
It can be additively partitioned as 
\begin{align*}
H_{3reg}(s_{1},s_{2},s_{3})&=C\bigl(Z_{31}(s_{1},s_{2},s_{3}) + Z_{32}(s_{1},s_{2},s_{3})+Z_{33}(s_{1},s_{2},s_{3})\bigr),
\end{align*}
where
\begin{equation} \label{I1I2I3}
\begin{aligned}
Z_{31}(s_{1},s_{2},s_{3}) &= X_{0}(s_{3})NX_0(s_{2})Z(s_{1},s_{2}), \\
Z_{32}(s_{1},s_{2},s_{3}) &= X_{0}(s_{3})H\bigl(X_0(s_{3}-s_{2})B \otimes X_0(s_{2})Z(s_{1},s_{2})\bigr), \\ 
Z_{33}(s_{1},s_{2},s_{3}) &=X_{0}(s_{3})H\bigl(X_0(s_{3}-s_{1})Z(s_{2}-s_{1},s_{3}-s_{1})\otimes X_0(s_{1})B\bigr).
\end{aligned}
\end{equation}
With these notations, we can extend Lemma~\ref{lemmapartial} to $H_{3reg}(s_{1},s_{2},s_{3})$.

\begin{lemma}\label{lemmapartialk3}
Let $Z_{3j}(s_{1},s_{2},s_{3})$, $j=1,2,3$,
be as defined in \eqref{I1I2I3}. Then the partial derivatives of $H_{3reg}(s_{1},s_{2},s_{3})$ can be written as 
\begin{align*}
\frac{\partial^{p+q+\ell}}{\partial s_{1}^{p}\partial s_{2}^{q}\partial s_{3}^{z}}H_{3reg}(s_{1},s_{2},s_3)=&C\Bigl(\frac{\partial^{p+q+\ell}}{\partial s_{1}^{p}\partial s_{2}^{q}\partial s_{3}^{\ell}}Z_{31}(s_{1},s_{2},s_3)\\ &+\frac{\partial^{p+q+\ell}}{\partial s_{1}^{p}\partial s_{2}^{q}\partial s_{3}^{\ell}}Z_{32}(s_{1},s_{2},s_3) +\frac{\partial^{p+q+\ell}}{\partial s_{1}^{p}\partial s_{2}^{q}\partial s_{3}^{\ell}}Z_{33}(s_{1},s_{2},s_3)\Bigr),
\end{align*}
where
\begin{align*}
\begin{split}
&\frac{\partial^{p+q+\ell}}{\partial s_{1}^{p} \partial s_{2}^{q} \partial s_{3}^{\ell}}Z_{31}(s_{1},s_{2},s_{3})
=(-1)^{\ell}\ell!\,X_{\ell}(s_{3}) N\frac{\partial^{p+q}}{\partial s_{1}^{p}\partial s_{2}^{q}}\bigg(Z_{21}(s_{1},s_{2}) +
Z_{22}(s_{1},s_{2})\bigg),\\
&\frac{\partial^{p+q+\ell}}{\partial s_{1}^{p} \partial s_{2}^{q} \partial s_{3}^{\ell}}Z_{32}(s_{1},s_{2},s_{3})=\sum_{k_{\ell}=0}^{\ell}
 (-1)^\ell\begin{pmatrix}\ell\\k_\ell\end{pmatrix}
(\ell-k_{\ell})!X_{\ell-k_{\ell}}(s_{3}) H \sum_{k_{q}=0}^{q}
\begin{pmatrix}q\\k_q\end{pmatrix} \\
&\enskip \times \bigg[(k_{\ell}  +k_{q})!X_{k_{q}+k_{\ell}}(s_{3}-s_{2})B  \otimes
\frac{\partial^{p+q-k_{q}}}{\partial s_{1}^{p} \partial s_{2}^{q-k_{q}}}
\Bigl(Z_{21}(s_{1},s_{2})  +
Z_{22}(s_{1},s_{2})\Bigr) \bigg],\\
&\frac{\partial^{p+q+\ell}}{\partial s_{1}^{p} \partial s_{2}^{q} \partial s_{3}^{\ell}} Z_{33}(s_{1},s_{2},s_{3})=\sum_{k_{\ell}=0}^{\ell}\begin{pmatrix}\ell\\k_\ell\end{pmatrix} (\ell-k_{\ell})!X_{\ell-k_{\ell}}(s_{3}) H
  \sum_{k_{p}=0}^{p}(-1)^{p+\ell-k_{p}-k_{\ell}}\begin{pmatrix}p\\k_p\end{pmatrix}\\
&\enskip \times\! \!\bigg[
\frac{\partial^{q+k_{p}+k_{\ell}}}{\partial s_{1}^{k_{p}}\partial s_{2}^{q}\partial s_{3}^{k_{\ell}}}  \!
\Bigl(Z_{21}\!(s_{2}\!-\!s_{1},s_{3}\!-\!s_{1})\!+\! Z_{22}\!(s_{2}\!-\!s_{1},s_{3}\!-\!s_{1})\Bigr) \!\otimes\! (p\!-\!k_{p})!X_{p-k_{p}}(s_{1})B\bigg]. 
\end{split}
\end{align*}
\end{lemma}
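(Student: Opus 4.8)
\textbf{Proof plan for Lemma~\ref{lemmapartialk3}.}

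The plan is to differentiate each of the three summands $Z_{31}, Z_{32}, Z_{33}$ from \eqref{I1I2I3} separately and reduce everything to derivatives of the already-understood functions $Z_{21}, Z_{22}$ (Lemma~\ref{lemmapartial}) and of the matrix-valued functions $X_j$ (via \eqref{lemma_eq1}). The decomposition $H_{3reg}=C(Z_{31}+Z_{32}+Z_{33})$ is linear in $C$ and differentiation commutes with left multiplication by the constant matrix $C$, so the outer claim (the displayed sum-of-three-derivatives identity) is immediate; the content is in the three formulas for $\partial^{p+q+\ell}Z_{3j}$.

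For $Z_{31}(s_1,s_2,s_3)=X_0(s_3)\,N\,X_0(s_2)Z(s_1,s_2)$, I would first note that $s_3$ appears only in the leading factor $X_0(s_3)$, so $\partial^\ell_{s_3}$ acts solely there, giving $(-1)^\ell \ell!\,X_\ell(s_3)$ by \eqref{lemma_eq1} with $j=0$; the remaining $\partial^p_{s_1}\partial^q_{s_2}$ falls on $N X_0(s_2)Z(s_1,s_2)=N(Z_{21}+Z_{22})$, using the identity $X_0(s_2)Z(s_1,s_2)=Z_{21}(s_1,s_2)+Z_{22}(s_1,s_2)$ recorded just above the lemma. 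This yields the first formula directly. For $Z_{32}=X_0(s_3)H\bigl(X_0(s_3-s_2)B\otimes X_0(s_2)Z(s_1,s_2)\bigr)$, the variable $s_3$ now appears in two places ($X_0(s_3)$ and $X_0(s_3-s_2)$) and $s_2$ in two places ($X_0(s_3-s_2)$ and $X_0(s_2)Z(s_1,s_2)$); I would apply the Leibniz rule in $s_3$ with index $k_\ell$ (splitting between the outer $X_0(s_3)$ and the inner $X_0(s_3-s_2)$ inside the Kronecker factor), then the Leibniz rule in $s_2$ with index $k_q$ (splitting between $X_0(s_3-s_2)$ and $Z_{21}+Z_{22}$), then let $\partial^p_{s_1}$ pass through to $Z_{21}+Z_{22}$ since $s_1$ occurs only there. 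Each $X_0$-factor derivative is converted via \eqref{lemma_eq1}: the chain-rule sign from $X_0(s_3-s_2)$ differentiated in $s_2$ cancels against the $(-1)$ in \eqref{lemma_eq1}, which is why only $(-1)^\ell$ survives in the stated formula. For $Z_{33}=X_0(s_3)H\bigl(X_0(s_3-s_1)Z(s_2-s_1,s_3-s_1)\otimes X_0(s_1)B\bigr)$, the bookkeeping is the same in spirit: Leibniz in $s_3$ (index $k_\ell$, split between $X_0(s_3)$ and the first tensor slot), Leibniz in $s_1$ (index $k_p$, split between $X_0(s_1)B$ in the second slot and everything in the first slot), and $\partial^q_{s_2}$ passes through to $Z_{21}(s_2-s_1,s_3-s_1)+Z_{22}(s_2-s_1,s_3-s_1)$; the key observation making the inner object expressible is again $X_0(s_3-s_1)Z(s_2-s_1,s_3-s_1)=Z_{21}(s_2-s_1,s_3-s_1)+Z_{22}(s_2-s_1,s_3-s_1)$, a relabelled instance of the same identity. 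The mixed-derivative notation $\partial^{q+k_p+k_\ell}/\partial s_1^{k_p}\partial s_2^{q}\partial s_3^{k_\ell}$ appearing on $Z_{21}+Z_{22}$ in the $Z_{33}$ formula is understood after the change of variables $(s_2-s_1,s_3-s_1)$, so one must track how $\partial_{s_1}$ splits into a genuine $\partial_{s_1}$ acting on the $X_0(s_1)B$ slot versus the chain-rule contributions through both arguments of $Z$; the signs $(-1)^{p+\ell-k_p-k_\ell}$ are exactly the accumulated chain-rule signs from those $s_1$-differentiations that cancel the $(-1)$'s in \eqref{lemma_eq1} partially.

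The main obstacle is purely combinatorial: correctly orchestrating two or three nested Leibniz expansions while keeping the chain-rule signs (from the shifted arguments $s_3-s_2$, $s_3-s_1$, $s_2-s_1$) straight, so that the final collapse of signs matches the stated $(-1)^\ell$ and $(-1)^{p+\ell-k_p-k_\ell}$ prefactors. There is no analytic difficulty — every step is an instance of \eqref{lemma_eq1}, the Leibniz product rule, and bilinearity of $\otimes$ and of $H(\cdot)$ — but one has to be careful that a derivative landing on, e.g., $X_0(s_3-s_2)$ with respect to $s_2$ contributes a factor $(-1)$ from the inner chain rule \emph{and} a factor $(-1)$ from \eqref{lemma_eq1}, netting $+1$, whereas a derivative with respect to $s_3$ on the same factor contributes only the $(-1)$ from \eqref{lemma_eq1}. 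I would organize the proof by stating it for each $Z_{3j}$ in turn, in each case first isolating the $s_3$-derivatives, then the remaining ones, and citing Lemma~\ref{lemmapartial} to avoid re-expanding $Z_{21},Z_{22}$; this keeps the argument short and makes the sign accounting auditable term by term.
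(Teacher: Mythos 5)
Your plan is correct and coincides with the paper's own route: the paper simply states that the proof is ``similar to that of Lemma~\ref{lemmapartial}'' and omits it, and that earlier proof is exactly the iterated Leibniz product rule combined with the derivative formula \eqref{lemma_eq1} (with the chain-rule signs from the shifted arguments) that you describe. Your additional observation that $X_0(s_2)Z(s_1,s_2)=Z_{21}+Z_{22}$ lets everything collapse onto derivatives of $Z_{21},Z_{22}$ is precisely the mechanism the paper relies on, so nothing is missing.
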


\begin{proof}
The proof is similar to that of Lemma~\ref{lemmapartial} and, therefore, omitted.
\end{proof}

Introducing the functions
\begin{align*}
R_N(\sigma_i,p) &= (-1)^p p!\, N X_p(s_1)B, \\	
R_H(\sigma_i,p,j) &= H\bigg(\sum_{k=0}^{p}(-1)^{p-k+j} \begin{pmatrix} p\\ k\end{pmatrix}
	(k+j)!\, X_{k+j}(\sigma_i)B \otimes (p-k)!X_{p-k}(\sigma_i)B\bigg), 
\end{align*} 
the partial derivatives of $Z_{21}(s_{1},s_{2})+ Z_{22}(s_{1},s_{2})$ can shorty be written as  
\begin{align*}
 \frac{\partial^{p+q}}{\partial s_1^p \partial s_2^q }\left.\big(Z_{21}(s_{1},s_{2})+ Z_{22}(s_{1},s_{2})\big)\right|_{\stackon{$\scriptstyle s_2=2\sigma_{i}$}{$\scriptstyle s_{1}=\sigma_{i}\enskip$}} =&(-1)^{q} q!\,X_{q}(2\sigma_i)R_N(\sigma_i,p) \\ & +
 \sum_{j=0}^{q} (-1)^{q-j} \begin{pmatrix} q\\ j\end{pmatrix} (q-j)!\, X_{q-j}(2\sigma_i)R_H(\sigma_i,p,j).
\end{align*}
With these observations, we can prove the following lemma. 

\begin{lemma}\label{lemmak3V}
For a~given set of interpolation points $\sigma_{i}\notin \Lambda(A,E)\cup\Lambda(\hat{A},\hat{E})$,
\mbox{$i=1,\ldots, m$}, let $V_1\in \mathbb{R}^{n\times r}$ satisfy 
\begin{align*}
\mbox{\rm im}(V_1)&= \starspan_{i=1,\ldots,m}\bigl\{V(\sigma_i,P_{1}), V_N(\sigma_i,0,0)+V_H(\sigma_i,0,0),\ldots,V_N(\sigma_i,P_{2},Q_{2})+V_H(\sigma_i,P_{2},Q_{2})\bigr\}, 
\end{align*}
where $V(\sigma_i,P_1)$ is defined as in Theorem~\ref{theorem2} with $P_{1}=P+Q$ and  
\begin{align*}
V_N(\sigma_i,P_{2},Q_{2})=& \biggl[X_0(2\sigma_i)R_N(\sigma_i,0),\ldots,X_{0}(2\sigma_i)R_N(\sigma_i,P_{2}),\ldots,\\&(-1)^{Q_{2}}Q_{2}!X_{Q_{2}}(2\sigma_i)R_N(\sigma_i,0),\ldots,(-1)^{Q_{2}}Q_{2}!X_{Q_{2}}(2\sigma_i)R_N(\sigma_i,P_{2})\biggr],\\
V_H(\sigma_i,P_{2},Q_{2})=& \biggl[X_0(2\sigma_i)R_H(\sigma_i,0,0),\big[X_0(2\sigma_i)R_H(\sigma_i,0,1)-X_1(2\sigma_i)R_H(\sigma_i,0,0)\big],\ldots,\\ &\big[X_0(2\sigma_i)R_H(\sigma_i,0,Q_{2})+ \ldots +(-1)^{Q_{2}}Q_{2}! X_{Q_{2}}(2\sigma_i)R_H(\sigma_i,0,0)\big],\ldots,\\&
X_0(2\sigma_i)R_H(\sigma_i,P_{2},0),\big[X_0(2\sigma_i)R_H(\sigma_i,P_{2},1)-X_1(2\sigma_i)R_H(\sigma_i,P_{2},0)\big],\ldots,\\ &\big[X_0(2\sigma_i)R_H(\sigma_i,P_{2},Q_{2})+ \ldots +(-1)^{Q_{2}}Q_{2}! X_{Q_{2}}(2\sigma_i)R_H(\sigma_i,P_{2},0)\big]\biggr],
\end{align*}
with $P_{2}=P$ and $Q_{2}=Q$.  Furthermore, let $W$ be as in Theorem~\ref{theorem2}. Then the reduced-order system obtained using the projection matrices $V_1$ and $W$ satisfies 
\begin{equation}\label{leeq}
\begin{aligned} 
\left. V_1\frac{\partial^{p+q}}{\partial s_1^p \partial s_2^q}(\hat{Z}_{21}(s_{1},s_{2})+\hat{Z}_{22}(s_{1},s_{2}))\right|_{\stackon{$\scriptstyle s_2=2\sigma_{i}$}{$\scriptstyle s_{1}=\sigma_{i}\enskip$}}   =\left. \frac{\partial^{p+q}}{\partial s_1^p \partial s_2^q}(Z_{21}(s_{1},s_{2})+Z_{22}(s_{1},s_{2}))\right|_{\stackon{$\scriptstyle s_2=2\sigma_{2}$}{$\scriptstyle s_{1}=\sigma_{i}\enskip$}}
\end{aligned}
\end{equation}
for $p=0,\ldots,P$ and $q=0,\ldots,Q$.
\end{lemma}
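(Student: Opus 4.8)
The plan is to reduce the statement to the rational-Krylov ``lifting'' trick already used in the proof of Theorem~\ref{theorem2}, applied at two sets of shifts: at the $\sigma_i$ (to control the inner factors $X_j(\sigma_i)B$) and at the $2\sigma_i$ (to control the outer factor $X_0(s_2)$, through a first-order recursion in $s_2$). First I would record the inner lifting inherited from the block $V(\sigma_i,P_1)$ of $\mbox{\rm im}(V_1)$: since $X_0(\sigma_i)B,\ldots,X_{P_1}(\sigma_i)B\in\mbox{\rm im}(V_1)$ with $P_1=P+Q$, the argument recalled after Theorem~\ref{thtnf} (and used in Theorem~\ref{theorem2}) gives $V_1\hat X_j(\sigma_i)\hat B=X_j(\sigma_i)B$ for $j=0,\ldots,P+Q$. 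Combining this with $\hat N=W^TNV_1$ and $\hat H=W^TH(V_1\otimes V_1)$, and noting that every index occurring in $R_N(\sigma_i,p)$ and $R_H(\sigma_i,p,j)$ is at most $p+q\le P_1$ whenever $p\le P$ and $q\le Q$, one obtains the reduced-to-original relations $\hat R_N(\sigma_i,p)=W^TR_N(\sigma_i,p)$ and $\hat R_H(\sigma_i,p,j)=W^TR_H(\sigma_i,p,j)$, the hatted quantities being the obvious reduced analogues.

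Next I would exploit the compact form $Z_{21}(s_1,s_2)+Z_{22}(s_1,s_2)=X_0(s_2)Z(s_1,s_2)$ introduced at the start of this section. Premultiplying by $s_2E-A$ and using $(s_2E-A)X_0(s_2)=I$ gives $(s_2E-A)(Z_{21}+Z_{22})(s_1,s_2)=Z(s_1,s_2)$. Applying $\frac{\partial^{p+q}}{\partial s_1^{p}\partial s_2^{q}}$ (the $s_2$-Leibniz expansion truncates because $\frac{\partial^2}{\partial s_2^2}(s_2E-A)=0$) and evaluating at $s_1=\sigma_i$, $s_2=2\sigma_i$ gives, writing $g_{p,q}$ for the right-hand side of \eqref{leeq} and $R_Z(\sigma_i,p,q)$ for $\frac{\partial^{p+q}}{\partial s_1^{p}\partial s_2^{q}}Z(s_1,s_2)$ at the same point,
\[
(2\sigma_iE-A)\,g_{p,q}+q\,E\,g_{p,q-1}=R_Z(\sigma_i,p,q),
\]
and the same identity for the reduced system with hats throughout. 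Since $R_Z(\sigma_i,p,q)$ is built out of $R_N(\sigma_i,p)$ and $R_H(\sigma_i,p,q)$, the first step yields $\hat R_Z(\sigma_i,p,q)=W^TR_Z(\sigma_i,p,q)$ for $p\le P$, $q\le Q$.

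The key step, and the one I expect to be the main obstacle, is to check that $g_{p,q}\in\mbox{\rm im}(V_1)$ for all $p\le P$ and $q\le Q$. By the expansion displayed immediately before the lemma, $g_{p,q}$ decomposes as a scalar multiple of $X_q(2\sigma_i)R_N(\sigma_i,p)$ plus the linear combination $\sum_{j=0}^{q}(-1)^{q-j}\binom{q}{j}(q-j)!\,X_{q-j}(2\sigma_i)R_H(\sigma_i,p,j)$, and the definitions of $V_N(\sigma_i,P,Q)$ and $V_H(\sigma_i,P,Q)$ are engineered precisely so that the first part is a column of $V_N$ and the second is a column of $V_H$; hence $g_{p,q}$ lies in $\mbox{\rm im}(V_1)$. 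Making this identification rigorous requires carefully matching the column orderings and the telescoping-type combinations defining $V_H$, which is the bookkeeping-heavy part of the argument.

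Finally I would close by induction on $q$ for each fixed $p\in\{0,\ldots,P\}$. For $q=0$, write $g_{p,0}=V_1\hat d$ (possible by the previous step); premultiplying $(2\sigma_iE-A)V_1\hat d=R_Z(\sigma_i,p,0)$ by $W^T$ and using the reduced recursion gives $(2\sigma_i\hat E-\hat A)\hat d=(2\sigma_i\hat E-\hat A)\hat g_{p,0}$, hence $\hat d=\hat g_{p,0}$ by invertibility of $2\sigma_i\hat E-\hat A$ (ensured by the assumption on the $\sigma_i$), i.e.\ $V_1\hat g_{p,0}=g_{p,0}$. For the inductive step, write $g_{p,q}=V_1\hat d$, substitute this together with the induction hypothesis $g_{p,q-1}=V_1\hat g_{p,q-1}$ and $\hat R_Z=W^TR_Z$ into the original recursion of the second step, premultiply by $W^T$, and compare with the reduced recursion to obtain $(2\sigma_i\hat E-\hat A)\hat d=(2\sigma_i\hat E-\hat A)\hat g_{p,q}$, hence $V_1\hat g_{p,q}=g_{p,q}$. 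Since $V_1\hat g_{p,q}$ and $g_{p,q}$ are exactly the two sides of \eqref{leeq}, this completes the proof; all steps other than the containment check are the familiar projection manipulations already carried out for Theorem~\ref{theorem2}.
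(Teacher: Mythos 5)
Your proposal is correct and follows essentially the same route as the paper: express the reduced-system derivative of $\hat{Z}_{21}+\hat{Z}_{22}$ in the compact $R_N$/$R_H$ form via the inner lifting $V_1\hat{X}_j(\sigma_i)\hat{B}=X_j(\sigma_i)B$, and then use the fact that the columns of $V_N(\sigma_i,P_2,Q_2)+V_H(\sigma_i,P_2,Q_2)$ place exactly the combinations $g_{p,q}$ in $\mbox{\rm im}(V_1)$ to conclude \eqref{leeq}. The only difference is that the paper asserts the final (outer) lifting in one line from this containment, whereas you supply the shift-and-invert recursion in $s_2$ and the induction on $q$ that rigorously justify it; this fills in detail rather than changing the argument.
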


\begin{proof}
The left-hand side of \eqref{leeq} can be written as
\begin{align*}
&V_1 \left.\frac{\partial^{p+q}}{\partial s_{1}^{p}\partial s_{2}^{q}} \bigg(\hat{Z}_{21}(s_{1},s_{2})+ \hat{Z}_{22}(s_{1},s_{2})\bigg)\right|_{\stackon{$\scriptstyle s_2=2\sigma_{i}$}{$\scriptstyle s_{1}=\sigma_{i}\enskip$}} 
\\ &\quad =  V_1\Bigl((-1)^{q}q!\hat{X}_{q}(2\sigma_i)W^TR_N(\sigma_i,p) + 
\sum_{j=0}^{q}(-1)^{q-j} \begin{pmatrix}q\\j\end{pmatrix} (q-j)!\hat{X}_{q-j}(2\sigma_i)W^TR_H(\sigma_i,p,j)\Bigr)
\\ &\quad =  (-1)^{q}q!X_{q}(2\sigma_i)R_N(\sigma_i,p)  +\sum_{j=0}^{q}(-1)^{q-j}\begin{pmatrix}q\\j\end{pmatrix}(q-j)!X_{q-j}(2\sigma_i)R_H(\sigma_i,p,j).
\end{align*}
This proves the lemma where the last equality follows since $V_{N}(\sigma_i,P,Q)+V_{H}(\sigma_i,P,Q)$ lies
in the image of $V_1$ for all values of $p$ and $q$.
\end{proof}

\begin{remark} Similarly to Lemma~\ref{lemmak3V}, one can also show the relation 
\begin{align*}
&\left.V_{1}\frac{\partial^{\tau}}{\partial s^{\tau}} \bigg(\hat{Z}_{21}(f_{1}(s),f_{2}(s))+ \hat{Z}_{22}(f_{1}(s),f_{2}(s))\bigg)\right|_{\stackon{$\scriptstyle f_{2}(s)=2\sigma_{i}$}{$\scriptstyle f_{1}(s)=\sigma_{i}\enskip$}}\\&\qquad =\left.\frac{\partial^{\tau}}{\partial s^{\tau}} \bigg(Z_{21}(f_{1}(s),f_{2}(s))+ Z_{22}(f_{1}(s),f_{2}(s))\bigg)\right|_{\stackon{$\scriptstyle f_{2}(s)=2\sigma_{i}$}{$\scriptstyle f_{1}(s)=\sigma_{i}\enskip$}}
\end{align*}
for any $f_{1}(s)$ and $f_{2}(s)$, where $s$ is the vector of variables $s_{1},s_{2},\ldots$. This will change the values of $P_{1}$, $P_{2}$ and $Q_{2}$ depending on the number of variables, the order of their derivatives and the combination of these variables used. In particular, this relation is useful in the regular form with $Z_{33}(s_{1},s_{2},s_{3})$, where $f_{1}(s)=s_{2}-s_{1}$ and $f_{2}(s)=s_{3}-s_{1}$.
\end{remark}

We can now identify the structure of the projection matrices $V_1$ and $W_1$ providing a reduced-order model which matches the multi-moments of the first three regular transfer functions while involving non-linear matrices $N$ and $H$. 

\begin{theorem} \label{theorem3}
For a~given set of interpolation points $\sigma_{i}\notin \Lambda(A,E)\cup\Lambda(\hat{A},\hat{E})$,
$i=1,\ldots, m$, let $V_1\in \mathbb{R}^{n\times r}$ be as defined in Lemma~\ref{lemmak3V} with 
$P_{1}=P+\max(Q,L)$, $P_{2}=\max(Q, P+L)$ and $Q_{2}=P+Q$, and let $W_1\in \mathbb{R}^{n\times r}$ satisfy 
\begin{align*}
\mbox{\rm im}(W_1)= \starspan\bigl\{W, X_{0}(3\sigma_{i})^{T}\!C^{T}\!,\ldots,X_{L}(3\sigma_{i})^{T}\!C^{T}\!, X_{0}(3\sigma_m)^{T}\!C^{T}\!, \ldots, X_{L}(3\sigma_{m})^{T}\!C^{T}\! \bigr\},
\end{align*}
where $W$ is as in Theorem~\ref{theorem2}. Then the reduced-order system satisfies
\begin{align}\label{regG}
\begin{split}
\left.\frac{\partial^{p}}{\partial s_{1}^{p}} H_{1reg}(s_{1})\right|_{s_{1}=\sigma_{i}}
&\!=\left.\frac{\partial^{p}}{\partial s_{1}^{p}}\hat{H}_{1reg}(s_{1})\right|_{s_{1}=\sigma_{i}}, \quad p=0,\ldots, P_1,\\
\left.\frac{\partial^{q}}{\partial s_{1}^{q}}H_{1reg}(s_{1})\right|_{s_{1}=2\sigma_i}
&\!=\left.\frac{\partial^{q}}{\partial s_{1}^{q}}\hat{H}_{1reg}(s_{1})\right|_{s_{1}=2\sigma_{i}}, \quad q=0,\ldots,Q_2,\\
\left.\frac{\partial^{\ell}}{\partial s_{1}^{\ell}}H_{1reg}(s_{1})\right|_{s_{1}=3\sigma_i}
&\!=\left.\frac{\partial^{\ell}}{\partial s_{1}^{\ell}}\hat{H}_{1reg}(s_{1})\right|_{s_{3}=3\sigma_{i}}, \quad \ell=0,\ldots, L,\\
\left.\frac{\partial^{p+q}}{\partial s_{1}^{p}\partial s_{2}^{q}}H_{2reg}(s_{1},s_{2})\right|_{ \stackon{$\scriptstyle s_{2}=2\sigma_{i}$}{$\scriptstyle s_{1}=\sigma_{i}\enskip$}}
&\!=\left.\frac{\partial^{p+q}}{\partial s_{1}^{p}\partial s_{2}^{q}}\hat{H}_{2reg}(s_{1},s_{2})\right|_{ \stackon{$\scriptstyle s_{2}=2\sigma_{i}$}{$\scriptstyle s_{1}=\sigma_{i}\enskip$}}\!,\;
\begin{array}{l} p=0,\ldots, P_2, \\ q=0,\ldots, Q_2,\end{array}
\end{split}
\end{align}
\begin{align*}
\left.\frac{\partial^{p+q+\ell}}{\partial s_{1}^{p}\partial s_{2}^{q}\partial s_{3}^{\ell}}\hat{H}_{3reg}(s_{1},s_{2},s_{3})\right|_{\Shortstack{$\scriptstyle s_{1}=\sigma_{i}\enskip$ $\scriptstyle s_{2}=2\sigma_{i}$ $\scriptstyle s_{3}=3\sigma_{i}$}} 
&\!= \left. \frac{\partial^{p+q+\ell}}{\partial s_{1}^{p}\partial s_{2}^{q}\partial s_{3}^{\ell}}H_{3reg}(s_{1},s_{2},s_{3})\right|_{\Shortstack{$\scriptstyle s_{1}=\sigma_{i}\enskip$ $\scriptstyle s_{2}=2\sigma_{i}$ $\scriptstyle s_{3}=3\sigma_{i}$}},\\
 p=0,\ldots,  P,\quad &q=0,\ldots, Q,\quad \ell=0,\ldots,L.
\end{align*}
\end{theorem}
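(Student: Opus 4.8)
The plan is to follow the template already established in the proofs of Theorem~\ref{thtnf} and Theorem~\ref{theorem2}, extended one layer deeper to handle the third regular transfer function. The first three scalar conditions in \eqref{regG} concern only $H_{1reg}(s_1)=H_{1sym}(s_1)=CX_0(s_1)B$ and its derivatives, which by \eqref{lemma_eq1} are (up to sign and factorial) $CX_k(s_1)B$. Since $\operatorname{im}(V_1)$ contains $X_k(\sigma_i)B$ for $k=0,\ldots,P_1$ and $\operatorname{im}(W_1)$ contains $X_k(2\sigma_i)^TC^T$ for $k=0,\ldots,Q_2$ and $X_k(3\sigma_i)^TC^T$ for $k=0,\ldots,L$, the relations $V_1\hat X_k(\sigma_i)\hat B = X_k(\sigma_i)B$ etc.\ of the type \eqref{case1eqs} hold, and premultiplying by $C$ (resp.\ by $B^T$ after transposing) gives the first three lines exactly as in Theorem~\ref{theorem2}. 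The fourth line, the $H_{2reg}$ conditions, is again Theorem~\ref{theorem2}: the only thing to check is that the enlarged $V_1$ and $W_1$ still contain the vectors that Theorem~\ref{theorem2} required with the \emph{new} indices $P_2=\max(Q,P+L)$ and $Q_2=P+Q$ in place of $P+Q$ and $Q$; this is a bookkeeping comparison of index ranges, using $V(\sigma_i,P_1)\subset\operatorname{im}(V_1)$ together with the $V_N+V_H$ blocks, and the inclusion $W\subset\operatorname{im}(W_1)$.

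The heart of the proof is the last, two-block display: implicit multi-moment matching of $H_{3reg}$. First I would apply Lemma~\ref{lemmapartialk3} to write $\partial^{p+q+\ell}H_{3reg}$ as $C$ times the sum of the three pieces $\partial^{p+q+\ell}Z_{3j}$, $j=1,2,3$, evaluated at $(s_1,s_2,s_3)=(\sigma_i,2\sigma_i,3\sigma_i)$, and likewise for the reduced system. The strategy for each piece is: peel off the outermost factor $X_{\ell-k_\ell}(s_3)$ (or $X_\ell(s_3)$ for $Z_{31}$), which lives at the argument $3\sigma_i$ and is handled by the $W_1$-block $X_k(3\sigma_i)^TC^T$, $k=0,\ldots,L$; the middle factor $X_{k_q+k_\ell}(s_3-s_2)B$ in $Z_{32}$ evaluates at $3\sigma_i-2\sigma_i=\sigma_i$ and is handled by the $V(\sigma_i,P_1)$-block; and the remaining inner factor is, by inspection of Lemma~\ref{lemmapartialk3}, a partial derivative of $Z_{21}+Z_{22}$ evaluated either at $(s_1,s_2)=(\sigma_i,2\sigma_i)$ (in $Z_{31},Z_{32}$) or at $(f_1(s),f_2(s))=(s_2-s_1,s_3-s_1)$, i.e.\ again $(\sigma_i,2\sigma_i)$ after substitution (in $Z_{33}$). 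For that inner factor I invoke Lemma~\ref{lemmak3V} — and for $Z_{33}$ its generalization stated in the Remark with $f_1(s)=s_2-s_1$, $f_2(s)=s_3-s_1$ — which is exactly why $V_1$ was defined to contain the $V_N+V_H$ blocks: it gives $V_1$ times the reduced inner quantity $=$ the full inner quantity. Substituting $x\otimes y$-bilinearity of $H$ and the identities $\hat H(\hat U^{-1}\cdot\otimes\hat U^{-1}\cdot)$-style relations $W_1^TH(V_1\,\cdot\otimes V_1\,\cdot)=\hat H(\cdot\otimes\cdot)$ then lets the whole expression collapse from the reduced side to the full side, term by term in the $k_p,k_q,k_\ell,j$ summations.

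The main obstacle is the last point: getting the index ranges in the definition of $V_1$ (the parameters $P_1,P_2,Q_2$ fed into Lemma~\ref{lemmak3V}) to line up \emph{simultaneously} with all three of $Z_{31},Z_{32},Z_{33}$. In $Z_{33}$ the inner $Z_{21}+Z_{22}$ is differentiated $k_p$ times in $s_1$, $q$ times in $s_2$, and $k_\ell$ times in $s_3$ after the change of variables $f_1=s_2-s_1$, $f_2=s_3-s_1$, so the chain rule spreads the $s_1$-derivative (total order $p$) across all three slots; tracking the worst-case order that reaches the inner function is what forces $P_2=\max(Q,P+L)$ and $Q_2=P+Q$ rather than the naive $P,Q$. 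I would handle this by first proving the Remark's generalization of Lemma~\ref{lemmak3V} carefully enough to read off, for given outer orders, which derivative orders $(\tau_1,\tau_2)$ of $Z_{21}+Z_{22}$ can occur, then taking $P_1,P_2,Q_2$ to be the coordinatewise maxima over $p\le P$, $q\le Q$, $\ell\le L$ and over all the binomial-sum indices. Everything else — the Leibniz expansions, the $H$-bilinearity, the $X$-reduction identities — is routine and parallels Theorem~\ref{theorem2} verbatim, so I would state those steps briefly and concentrate the write-up on the index accounting and the three-way case split over $j=1,2,3$.
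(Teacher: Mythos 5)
Your proposal is correct and follows essentially the same route as the paper: the first four relations are delegated to Theorem~\ref{theorem2} and Lemma~\ref{lemmak3V}, and the third-transfer-function conditions are proved by decomposing via Lemma~\ref{lemmapartialk3} into $Z_{31},Z_{32},Z_{33}$, peeling off the outer $X_{\ell-k_\ell}(3\sigma_i)$ factor through $W_1$, invoking Lemma~\ref{lemmak3V} (and, for $Z_{33}$, its Remark-generalization with $f_1=s_2-s_1$, $f_2=s_3-s_1$) for the inner $Z_{21}+Z_{22}$ blocks, and collapsing with $\hat N=W_1^TNV_1$ and $\hat H=W_1^TH(V_1\otimes V_1)$ together with the mixed-product property of the Kronecker product. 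Your added attention to the index bookkeeping behind $P_1,P_2,Q_2$ is a point the paper leaves implicit, but it does not change the argument.
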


\begin{proof}
The first four equations in \eqref{regG} are easy to prove using Lemma~\ref{lemmak3V} and Theorem~\ref{theorem2}. To show the last equation, we begin with
\begin{equation*}
\begin{aligned}
\hat{C}\frac{\partial^{p+q+\ell}}{\partial s_{1}^{p}\partial s_{2}^{q}\partial s_{3}^{\ell}}& \left. \hat{Z}_{31}(s_{1},s_{2},s_{3})\right|_{\substack{s_{1}=\sigma_{i}\enskip \\ s_{2}=2\sigma_{i} \\ s_{3}=3\sigma_{i}}} \\
& = \hat{C}(-1)^{\ell}\ell!\hat{X}_{\ell}(3\sigma_{i})\hat{N}\left.
\frac{\partial^{p+q}}{\partial s_{1}^{p}\partial s_{2}^{q}}\bigg(\hat{Z}_{21}(s_{1},s_{2})
+ \hat{Z}_{22}(s_{1},s_{2})\bigg)\right|_{\substack{s_{1}=\sigma_{i}\enskip \\s_{2}=2\sigma_{i}}} \\ 
& = (-1)^{\ell}\ell!\hat{C}\hat{X}_{\ell}(3\sigma_{i})W_1^{T}NV_1
\frac{\partial^{p+q}}{\partial s_{1}^{p}\partial s_{2}^{q}}\bigg(\hat{Z}_{21}(s_{1},s_{2})  
+\hat{Z}_{22}(s_{1},s_{2})\left.\bigg)\right|_{\substack{s_{1}=\sigma_{i}\enskip\\s_{2}=2\sigma_{i}}} \\ 
& = (-1)^{\ell}\ell!CX_{\ell}(3\sigma_{i})N
\left.\frac{\partial^{p+q}}{\partial s_{1}^{p}\partial s_{2}^{q}}\bigg(Z_{21}(s_{1},s_{2})
+Z_{22}(s_{1},s_{2})\bigg)\right|_{\substack{s_2=2\sigma_{i} \\ s_{1}=\sigma_i}}\\ 
& = \left.C\frac{\partial^{p+q+\ell}}{\partial s_{1}^{p}\partial s_{2}^{q}\partial s_{3}^{\ell}}Z_{31}(s_{1},s_{2},s_{3})\right|_{\substack{s_{1}=\sigma_{i}\enskip\\ s_{2}=2\sigma_{i} \\ s_{3}=3\sigma_{i}}},
\end{aligned}
\end{equation*}
where the second last equation is due to Lemma \ref{lemmak3V}. Now for $\hat{Z}_{32}(s_{1},s_{2},s_{3})$, we can write 
\begin{align*}
 &\hat{C} \left.\frac{\partial^{p+q+\ell}}{\partial s_{1}^{p}\partial s_{2}^{q}\partial s_{3}^{\ell}}\hat{Z}_{32}(s_{1},s_{2},s_{3})\right|_{\substack{ s_{1}=\sigma_{i}\enskip\\ s_{2}=2\sigma_{i} \\s_{3}=3\sigma_{i}}}\\ &
= \hat{C} \sum_{k_{\ell}=0}^{\ell}(-1)^{\ell-k_{\ell}}\begin{pmatrix}\ell\\k_\ell\end{pmatrix}(\ell-k_{\ell})!\hat{X}_{\ell-k_{\ell}}(3\sigma_{i}) \hat{H}
  \sum_{k_{q}=0}^{q} \Bigg((-1)^{k_{q}+k_{\ell}}\begin{pmatrix}q\\k_q\end{pmatrix}(k_{q}+k_{\ell})!\hat{X}_{k_{q}+k_{\ell}}(\sigma_{i})\hat{B}
  \\&
 \otimes \bigg[\left. \frac{\partial^{p+q-k_{q}}}{\partial s_{1}^{p}\partial s_{2}^{q-k_{q}}}\hat{Z}_{21}(s_{1},s_{2})\right|_{\Shortstack{ $\scriptstyle s_{1}=\sigma_{i}\enskip$ $\scriptstyle s_{2}=2\sigma_{i}$}}  +\left. \frac{\partial^{p+q-k_{q}}}{\partial s_{1}^{p}\partial s_{2}^{q-k_{q}}}\hat{Z}_{22}(s_{1},s_{2})\right|_{\Shortstack{$\scriptstyle s_{1}=\sigma_{i}\enskip$ $\scriptstyle s_{2}=2\sigma_{i}$}}\bigg] \Bigg) 
 \\ 
= \hat{C} &\sum_{k_{\ell}=0}^{\ell}(-1)^{\ell-k_{\ell}}\begin{pmatrix}\ell\\k_\ell\end{pmatrix}(\ell-k_{\ell})!\hat{X}_{\ell-k_{\ell}}(3\sigma_{i}) 
 W^{T}H(V_{1}\otimes V_{1}) \sum_{k_{q}=0}^{q} \Bigg((-1)^{k_{q}+k_{\ell}}\begin{pmatrix}q\\k_q\end{pmatrix}(k_{q}+k_{\ell})!\hat{X}_{k_{q}+k_{\ell}}(\sigma_{i})\hat{B}
  \\  &
 \otimes \bigg[\left.\frac{\partial^{p+q-k_{q}}}{\partial s_{1}^{p}\partial s_{2}^{q-k_{q}}}\hat{Z}_{21}(s_{1},s_{2})\right|_{\Shortstack{$\scriptstyle s_{1}=\sigma_{i}\enskip$ $\scriptstyle s_{2}=2\sigma_{i}$}}  +\left.\frac{\partial^{p+q-k_{q}}}{\partial s_{1}^{p}\partial s_{2}^{q-k_{q}}}\hat{Z}_{22}(s_{1},s_{2})\right|_{\Shortstack{$\scriptstyle s_{1}=\sigma_{i}\enskip$ $\scriptstyle s_{2}=2\sigma_{i}$}}\bigg] \Bigg) 
 \\ 
= \hat{C} &\sum_{k_{\ell}=0}^{\ell}(-1)^{\ell-k_{\ell}}\begin{pmatrix}\ell\\k_\ell\end{pmatrix}(\ell-k_{\ell})!\hat{X}_{\ell-k_{\ell}}(3\sigma_{i}) 
  W^{T}H\sum_{k_{q}=0}^{q} \Bigg(V_{1}(-1)^{k_{q}+k_{\ell}}\begin{pmatrix}q\\k_q\end{pmatrix}(k_{q}+k_{\ell})!\hat{X}_{k_{q}+k_{\ell}}(\sigma_{i})\hat{B}
  \\   &
 \otimes V_{1} \bigg[\left. \frac{\partial^{p+q-k_{q}}}{\partial s_{1}^{p}\partial s_{2}^{q-k_{q}}}\hat{Z}_{21}(s_{1},s_{2})\right|_{\Shortstack{ $\scriptstyle s_{1}=\sigma_{i}\enskip$ $\scriptstyle s_{2}=2\sigma_{i}$}}  +\left. \frac{\partial^{p+q-k_{q}}}{\partial s_{1}^{p}\partial s_{2}^{q-k_{q}}}\hat{Z}_{22}(s_{1},s_{2})\right|_{\Shortstack{$\scriptstyle s_{1}=\sigma_{i}\enskip$ $\scriptstyle s_{2}=2\sigma_{i}$}}\bigg] \Bigg) \\
= C&\sum_{k_{\ell}=0}^{\ell}(-1)^{\ell-k_{\ell}}\begin{pmatrix}\ell\\k_\ell\end{pmatrix}(\ell-k_{\ell})!X_{\ell-k_{\ell}}(3\sigma_{i}) 
  H\sum_{k_{q}=0}^{q} \Bigg((-1)^{k_{q}+k_{\ell}}\begin{pmatrix}q\\k_q\end{pmatrix}(k_{q}+k_{\ell})!X_{k_{q}+k_{\ell}}(\sigma_{i})B
  \\&
 \otimes \bigg[\left.\frac{\partial^{p+q-k_{q}}}{\partial s_{1}^{p}\partial s_{2}^{q-k_{q}}}Z_{21}(s_{1},s_{2})\right|_{\Shortstack{$\scriptstyle s_{1}=\sigma_{i}\enskip$ $\scriptstyle s_{2}=2\sigma_{i}$ }}  +\left. \frac{\partial^{p+q-k_{q}}}{\partial s_{1}^{p}\partial s_{2}^{q-k_{q}}}Z_{22}(s_{1},s_{2})\right|_{\Shortstack{$\scriptstyle s_{1}=\sigma_{i}\enskip$ $\scriptstyle s_{2}=2\sigma_{i}$ }}\bigg] \Bigg) \\
= C& \left. \frac{\partial^{p+q+\ell}}{\partial s_{1}^{p}\partial s_{2}^{q}\partial s_{3}^{\ell}}Z_{32}(s_{1},s_{2},s_{3})\right|_{\Shortstack{$\scriptstyle s_{1}=\sigma_{i}\enskip$ $\scriptstyle s_{2}=2\sigma_{i}$ $\scriptstyle s_{3}=3\sigma_{i}$ }}.
\end{align*}
Similarly for the third component $\hat{Z}_{33}(s_{1},s_{2},s_{3})$, we have
\begin{align*}
&\left.\hat{C}\frac{\partial^{p+q+\ell}}{\partial s_{1}^{p}\partial s_{2}^{q}\partial s_{3}^{z}}\hat{Z}_{33}(s_{1},s_{2},s_{3}) \right|_{\Shortstack{$\scriptstyle s_{1}=\sigma_{i}\enskip$ $\scriptstyle s_{2}=2\sigma_{i}$  $\scriptstyle s_{3}=3\sigma_{i}$}}\\
=& \hat{C}\sum_{k_{\ell}=0}^{\ell}(-1)^{\ell-k_{\ell}}\begin{pmatrix}\ell\\k_\ell\end{pmatrix}(\ell-k_{\ell})!\hat{X}_{\ell-k_{\ell}}(3\sigma_{i})\hat{H} \sum_{k_{p}=0}^{p} \Bigg(
  \bigg[\frac{\partial^{k_{p}+q+k_{\ell}}}{\partial s_{1}^{k_{p}}\partial s_{2}^{q}\partial s_{3}^{k_{\ell}}}\hat{Z}_{21}(s_{2}-s_{1},s_{3}-s_{1}) 
\\& +\left. \frac{\partial^{k_{p}+q+k_{\ell}}}{\partial s_{1}^{k_{p}}\partial s_{2}^{q}\partial s_{3}^{k_{\ell}}}\hat{Z}_{22}(s_{2}-s_{1},s_{3}-s_{1}) \bigg] \right|_{\Shortstack{ $\scriptstyle s_{1}=\sigma_{i}\enskip$ $\scriptstyle s_{2}=2\sigma_{i}$ $\scriptstyle s_{3}=3\sigma_{i}$}}
 \otimes (-1)^{p-k_{p}}\begin{pmatrix}p\\k_p\end{pmatrix}(p-k_{p})!\hat{X}_{p-k_{p}}(\sigma_{i})\hat{B}] \Bigg) \\  
=& \hat{C}\sum_{k_{\ell}=0}^{\ell}(-1)^{\ell-k_{\ell}}\begin{pmatrix}\ell\\k_\ell\end{pmatrix}(\ell-k_{\ell})!\hat{X}_{\ell-k_{\ell}}(3\sigma_{i})W^{T}H(V_{1}\otimes V_{1})\sum_{k_{p}=0}^{p}  \Bigg(
\bigg[\frac{\partial^{k_{p}+q+k_{\ell}}}{\partial s_{1}^{k_{p}}\partial s_{2}^{q}\partial s_{3}^{k_{\ell}}}
\hat{Z}_{21}(s_{2}-s_{1},s_{3}-s_{1})\\& +\left. \frac{\partial^{k_{p}+q+k_{\ell}}}{\partial s_{1}^{k_{p}}\partial s_{2}^{q}\partial s_{3}^{k_{\ell}}}\hat{Z}_{22}(s_{2}-s_{1},s_{3}-s_{1})\bigg] \right|_{\Shortstack{$\scriptstyle s_{1}=\sigma_{i}\enskip$ $\scriptstyle s_{2}=2\sigma_{i}$ $\scriptstyle s_{3}=3\sigma_{i}$}}
 \otimes (-1)^{p-k_{p}}\begin{pmatrix}p\\k_p\end{pmatrix}(p-k_{p})!\hat{X}_{p-k_{p}}(\sigma_{i})\hat{B}] \Bigg)\\ 
=& \hat{C}\sum_{k_{\ell}=0}^{\ell}(-1)^{\ell-k_{\ell}}\begin{pmatrix}\ell\\k_\ell\end{pmatrix}(\ell-k_{\ell})!\hat{X}_{\ell-k_{\ell}}(3\sigma_{i})W^{T}H\sum_{k_{p}=0}^{p}  \Bigg(V_{1}
\bigg[ \frac{\partial^{k_{p}+q+k_{\ell}}}{\partial s_{1}^{k_{p}}\partial s_{2}^{q}\partial s_{3}^{k_{\ell}}}\hat{Z}_{21}(s_{2}-s_{1},s_{3}-s_{1})
\\&+\left. \frac{\partial^{k_{p}+q+k_{\ell}}}{\partial s_{1}^{k_{p}}\partial s_{2}^{q}\partial s_{3}^{k_{\ell}}}\hat{Z}_{22}(s_{2}-s_{1},s_{3}-s_{1})\bigg]\right|_{\Shortstack{$\scriptstyle s_{1}=\sigma_{i}\enskip$  $\scriptstyle s_{2}=2\sigma_{i}$ $\scriptstyle s_{3}=3\sigma_{i}$}}
\otimes V_{1} (-1)^{p-k_{p}}\begin{pmatrix}p\\k_p\end{pmatrix}(p-k_{p})!\hat{X}_{p-k_{p}}(\sigma_{i})\hat{B}] \Bigg)\\  
=& C\sum_{k_{\ell}=0}^{\ell}(-1)^{\ell-k_{\ell}}\begin{pmatrix}\ell\\k_\ell\end{pmatrix}(\ell-k_{\ell})!X_{\ell-k_{\ell}}(3\sigma_{i})\sum_{k_{p}=0}^{p}H  \Bigg(
\bigg[\frac{\partial^{k_{p}+q+k_{\ell}}}{\partial s_{1}^{k_{p}}\partial s_{2}^{q}\partial s_{3}^{k_{\ell}}}Z_{21}(s_{2}-s_{1},s_{3}-s_{1})
\\&
+\left. \frac{\partial^{k_{p}+q+k_{\ell}}}{\partial s_{1}^{k_{p}}\partial s_{2}^{q}\partial s_{3}^{k_{\ell}}}Z_{22}(s_{2}-s_{1},s_{3}-s_{1})\bigg]\right|_{\Shortstack{$\scriptstyle s_{1}=\sigma_{i}\enskip$ $\scriptstyle s_{2}=2\sigma_{i}$  $\scriptstyle s_{3}=3\sigma_{i}$}}
\otimes (-1)^{p-k_{p}}\begin{pmatrix}p\\k_p\end{pmatrix}(p-k_{p})!X_{p-k_{p}}(\sigma_{i})B] \Bigg)\\
=&\left.C\frac{\partial^{p+q+\ell}}{\partial s_{1}^{p}\partial s_{2}^{q}\partial s_{3}^{\ell}}Z_{33}(s_{1},s_{2},s_{3})\right|_{\Shortstack{$\scriptstyle s_{1}=\sigma_{i}\enskip$  $\scriptstyle s_{2}=2\sigma_{i}$ $\scriptstyle s_{3}=3\sigma_{i}$}}
 \end{align*}
The above results clearly show that the last equation in \eqref{regG} also holds and this completes the proof.
\end{proof}

Thus we can match higher multi-moments for the first three subsystems and the approach is summarised in Algorithm~\ref{alg:2}.

\begin{algorithm}[h]
\SetAlgoLined
\KwIn{$E, A, B, C, N, H$, $\sigma_{i} \in \mathbb{C}$ for $i=1,\ldots,m$, $P, Q, L\in\mathbb{N}$}
\KwOut{$\hat{E}, \hat{A}, \hat{B}, \hat{C}, \hat{N}, \hat{H}$}
Compute $V$ and $W$ as in Algorithm~\ref{alg:1} with $k_{v}=0:P+\max(Q,L)$ and $k_{w}=0:\max(P,Q)$. \\
$V_{1}=V; W_{1}=W; Q_{max}=\max(P+L,Q); P_{max}=\max(P,Q+L) $;\\
\For{$ 	Q_{2}=0:Q_{max}$}{
\For{$P_{2}=0:P_{max}$}{
\begin{equation*}
\starspan(V_{1})=\starspan_{i=1,\ldots,m}([V_{1}, ~ [V_{N}(\sigma_{i},P_{2},Q_{2})+V_{H}(\sigma_{i},P_{2},Q_{2})]])
\end{equation*}}}
\For{$\ell=0:L$}{
\begin{equation*}
\starspan(W_{1})=\starspan_{i=1,...,m}([W_{1}, ~X_{\ell}(3\sigma_{i})^{T}C^T])
\end{equation*}}
 $U=orth([V_1, W_1])$; \\
Compute the reduced-order matrices
\begin{align*}
\hat{E}&=U^{T}EU,& \hat{A}&=U^{T}AU,& \hat{B}&=U^{T}B,&\\
\hat{C}&=U^{T}C,& \hat{N}&=U^{T}NU,& \hat{H}&=U^{T}H(U\otimes U).
\end{align*}\\
{\bf return} $\hat{E}, \hat{A}, \hat{B}, \hat{C}, \hat{N}, \hat{H}$   
\caption{Multi-moment matching for three regular transfer functions}
\label{alg:2}
\end{algorithm}

\section{Numerical Examples}
\label{sec:comparison}
We consider three benchmark examples: the non-linear RC circuit \cite{morRewW03}, the 1D Burgers' equation \cite{morKunV08} and the FitzHugh-Nagumo system \cite{morChaS10} for comparing our results. Three recent nonlinear reduction techniques are used for testing, that includes implicit symmetric moment matching (as discussed in Theorems \ref{thbenner}), generalized implicit symmetric moment matching (discussed in Theorem \ref{thtnf}), and implicit regular moment matching. The results of the first two methods will be represented by (imm-s\protect\greenline) and (igmm-s\protect\blueline). The proposed method, which involves generalized implicit moment matching for regular subsystems is represented by (igmm-r2 \protect\magentaline) and (igmm-r3 \protect\redline) for the first two and three subsystems, respectively.\par
\subsection{Non-linear RC Circuit} It is a well-used example for model order reduction of nonlinear systems, where a non-linear RC circuit as shown in Figure \ref{figrc1} is considered for model reduction.\par
\begin{figure}
\includegraphics[width=1\linewidth, height=0.295123\linewidth]{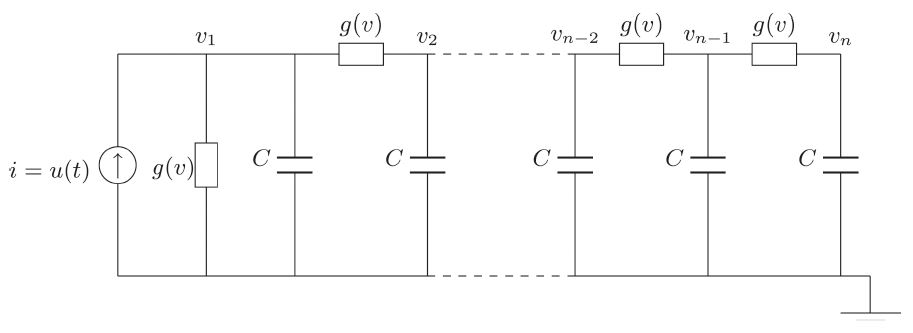}
\caption{RC circuit diagram}
\label{figrc1}
\centering
\end{figure}
The non-linearity is due to the diode $I$-$V$ characteristics, expressed as $g(v)=e^{40v}-1$, where $v$ is the node voltage. The voltage $v_{1}(t)$ at node 1 is taken as the output and the current $i$ as the input of the system. After applying Kirchoff's current law at each of the $N$ nodes, with the assumption of a normalised capacitance $C=1$, we obtain
\begin{align*}
\dot{v}(t)=f(v(t))+bu(t), && y(t)=cv(t),
\end{align*}
in which $b=c^{T}$ is the first column of the $N\times N$ identity matrix. Transforming this non-linear model to an equivalent quadratic-bilinear descriptor system increases the size to $n=2N$. For our results, we set $N=1250$, so that the order of the quadratic-bilinear system is $n = 2500$.\par

The order of the system is reduced to $r\approx 23$ with each of the four techniques, (imm-s2), (igmm-s2), (igmm-r2) and (igmm-r3). The 2 and 3 at the end denote the first two and three transfer functions are matched by each technique, respectively. The parameters used by the model reduction techniques are reported in Table \ref{tabelrc} where $r$ is the reduced order, $m$ is the number of interpolation points and $\sigma$ represent the interpolation points. The order $\mathcal{O} (p,q,\ell)$ shows the number of higher-order moments matched with respect to $s_{1}$, $s_{2}$ and $s_{3}$. The interpolation points $\sigma$ are similar to those used in \cite{morYanJ18}, so that we can reproduce their results and perform comparison with the proposed technique. The output of the original and the reduced system for an exponential input $e^{-t}$ along with their relative errors $(\frac{|y(t)-y_r(t)|)}{|y(t)|})$ for each reduction technique are shown in Figure \ref{rc2_exp}. The maximum errors $e_{max}$ for each reduced techniques are shown in Table \ref{tableRCm}.
\begin{table}
\centering
\begin{tabular}{|c|c|c|c|c|}
\hline
MOR technique                    & $r$ & $m$ & $\sigma$          & $\mathcal{O} (p,q,\ell) $ \\ \hline
(imm-s2)				    & 23  & 5          & {[}0.01,1,10,100,1000,10000{]} & $(1,1,-)$         \\ \hline
(igmm-s2)    				    & 24  & 2          & {[}0.1,10,1000{]} & $(2,2,-)$         \\ \hline
(igmm-r2)                      & 22  & 2          & {[}0.1,10,1000{]} & $(3,2,-)$         \\ \hline
(igmm-r3)                      & 23  & 1          & {[}0.1,10{]} & $(1,1,2)$          \\ \hline
\end{tabular}
\caption{Parameters for Non-linear RC-circuit }
\label{tabelrc}
\end{table}
\begin{table}
\centering
\begin{tabular}{|c|c|c|c|}
\hline
MOR technique 	& $r$ & $e_{max}$ for $u(t)=e^{-t}$	&$e_{max}$ for $u(t)=cos(2\pi (t/10)+1)/2$\\
\hline
(imm-s2) 	& 23 & $0.272 \times 10^{-6}$	&$0.414 \times 10^{-6^{•}}$\\
\hline
(igmm-s2) 	& 24 & $0.1715 \times 10^{-5}$	&$0.3899 \times 10^{-5}$\\
\hline
(igmm-r2)	& 22 & $0.982 \times 10^{-6}$	&$0.2274 \times 10^{-5}$\\
\hline
(igmm-r3) 	& 23 & $0.132 \times 10^{-6}$	&$0.133 \times 10^{-6}$\\
\hline
\end{tabular}
\caption{Maximum errors for Non-linear RC-circuit}
\label{tableRCm}
\end{table}
\begin{figure}
\begin{center}
\subfloat[Output Response]{\label{rc2_a}\includegraphics[width=0.45\linewidth, height=0.35\linewidth]{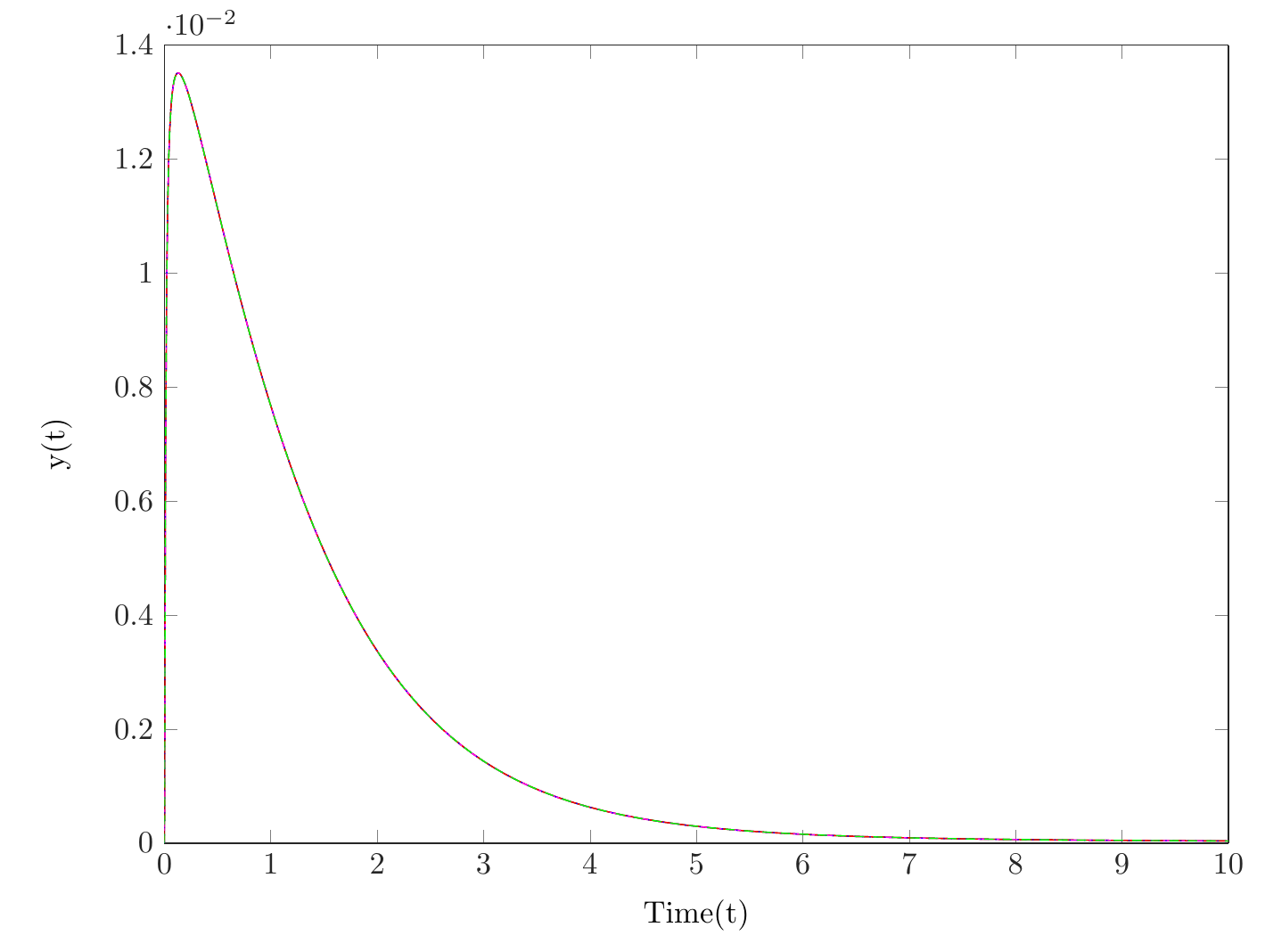}}
\subfloat[Relative Error]{\label{rc2_b}\includegraphics[width=0.45\linewidth, height=0.35\linewidth]{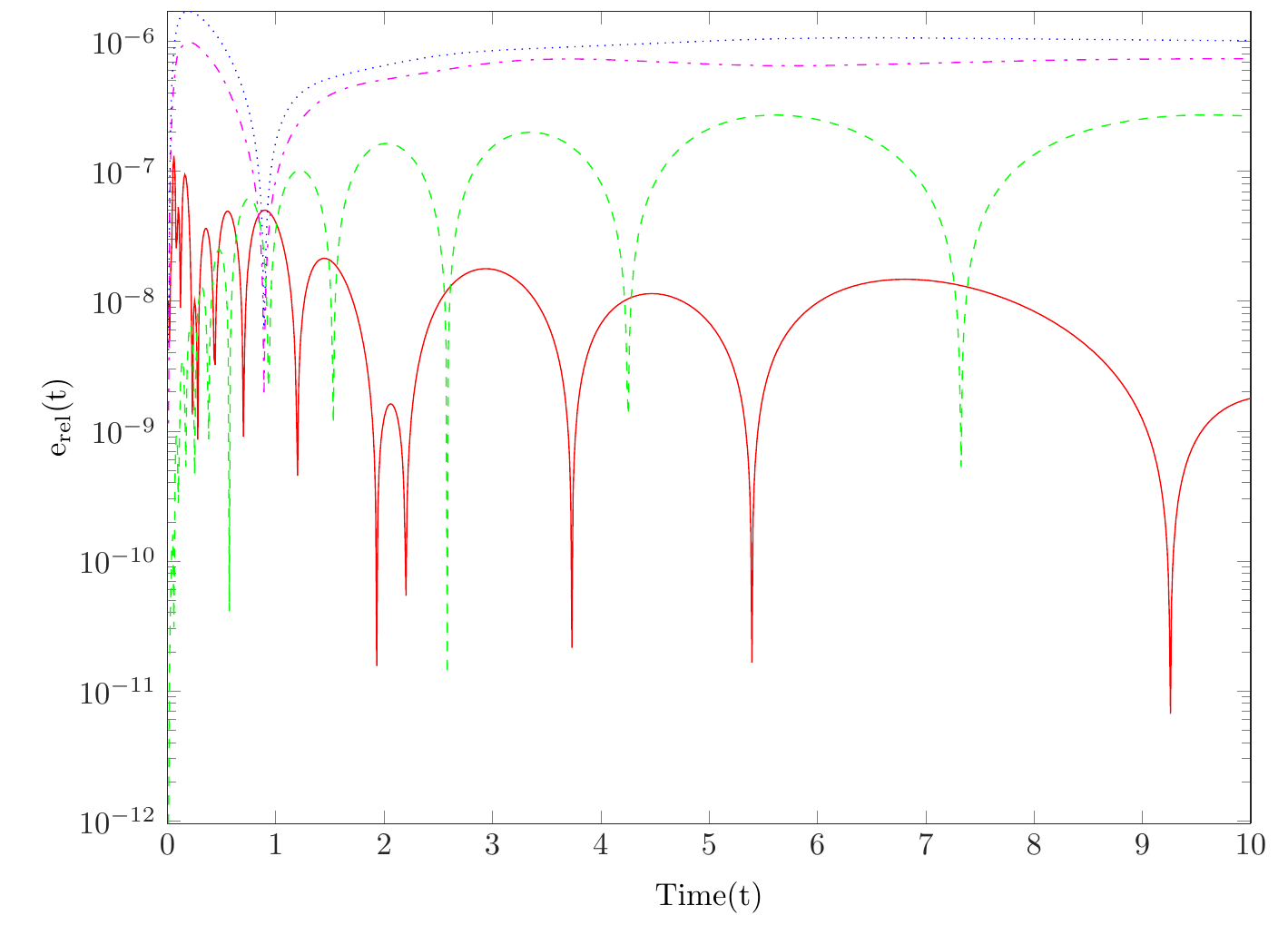}}
\caption{Non-linear RC-circuit with $u(t)=e^{-t}$; Reduced Order Model (ROM) via imm-s (\protect\greenline), igmm-s (\protect\blueline), igmm-r2 (\protect\magentaline) and  igmm-r3 (\protect\redline).}
\label{rc2_exp}
\end{center}\end{figure}
Overall the results are similar for those techniques that match the first two transfer functions; imm-s2, igmm-s2 and igmm-r2. The higher-order moment matching techniques; however, do not use nonlinear matrices $N$ and $H$ in the construction of the projection matrices $V$ and $W$. Also, fewer interpolation points are involved, which makes the choice of interpolation points relatively easy.\par
If we change the input to $u(t)=cos(2\pi (t/10)+1)/2$, the results are shown in Figure \ref{rcsin}. It is clear that the change in input shows similar results without constructing the reduced model, showing that the reduced model is input independent. \par
\begin{figure}
\begin{center}
\subfloat[Output]{\label{rc2_c}\includegraphics[width=0.45\linewidth, height=0.35\linewidth]{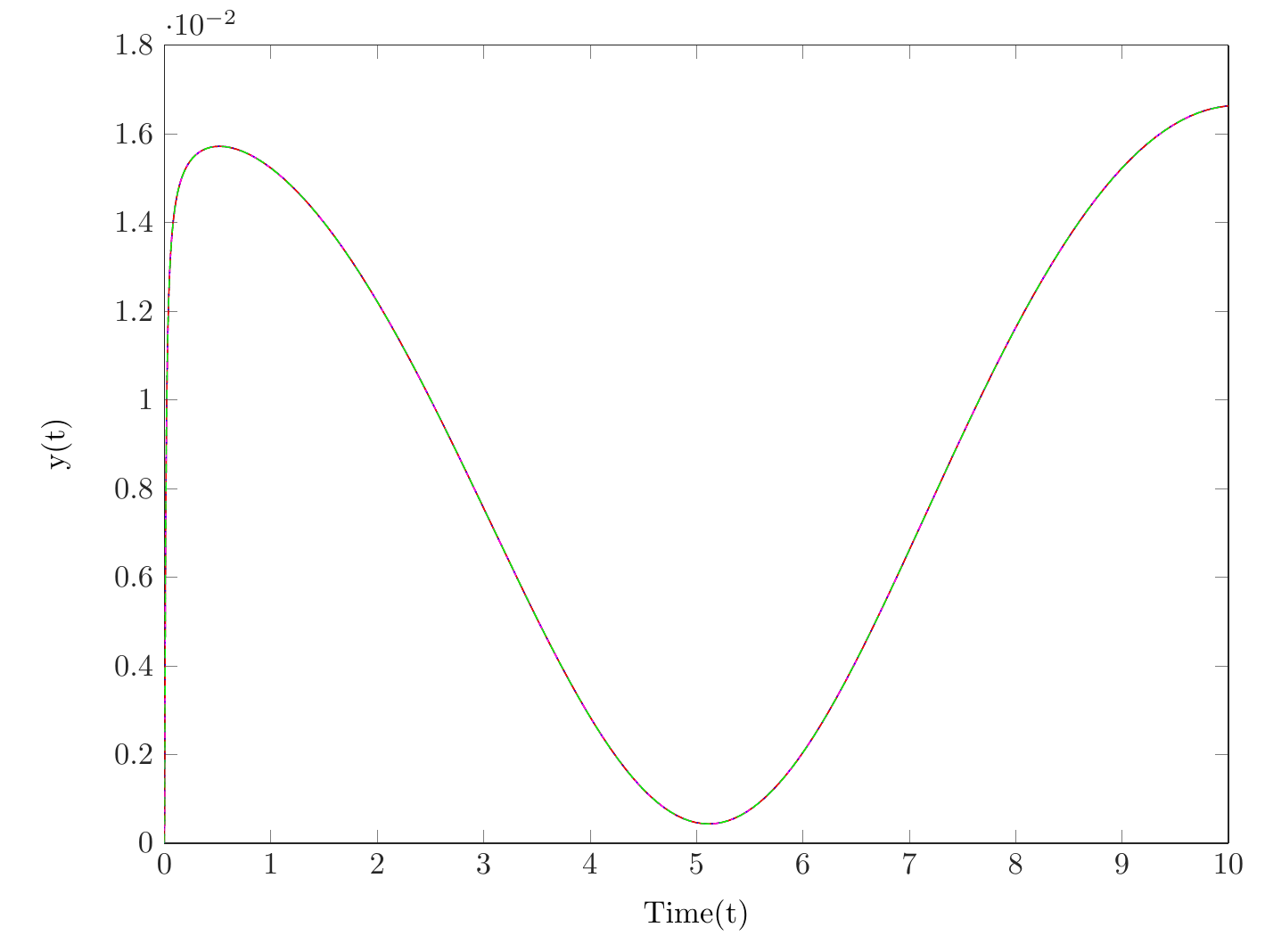}}
\subfloat[Relative Error]{\label{rc2_d}\includegraphics[width=0.45\linewidth, height=0.35\linewidth]{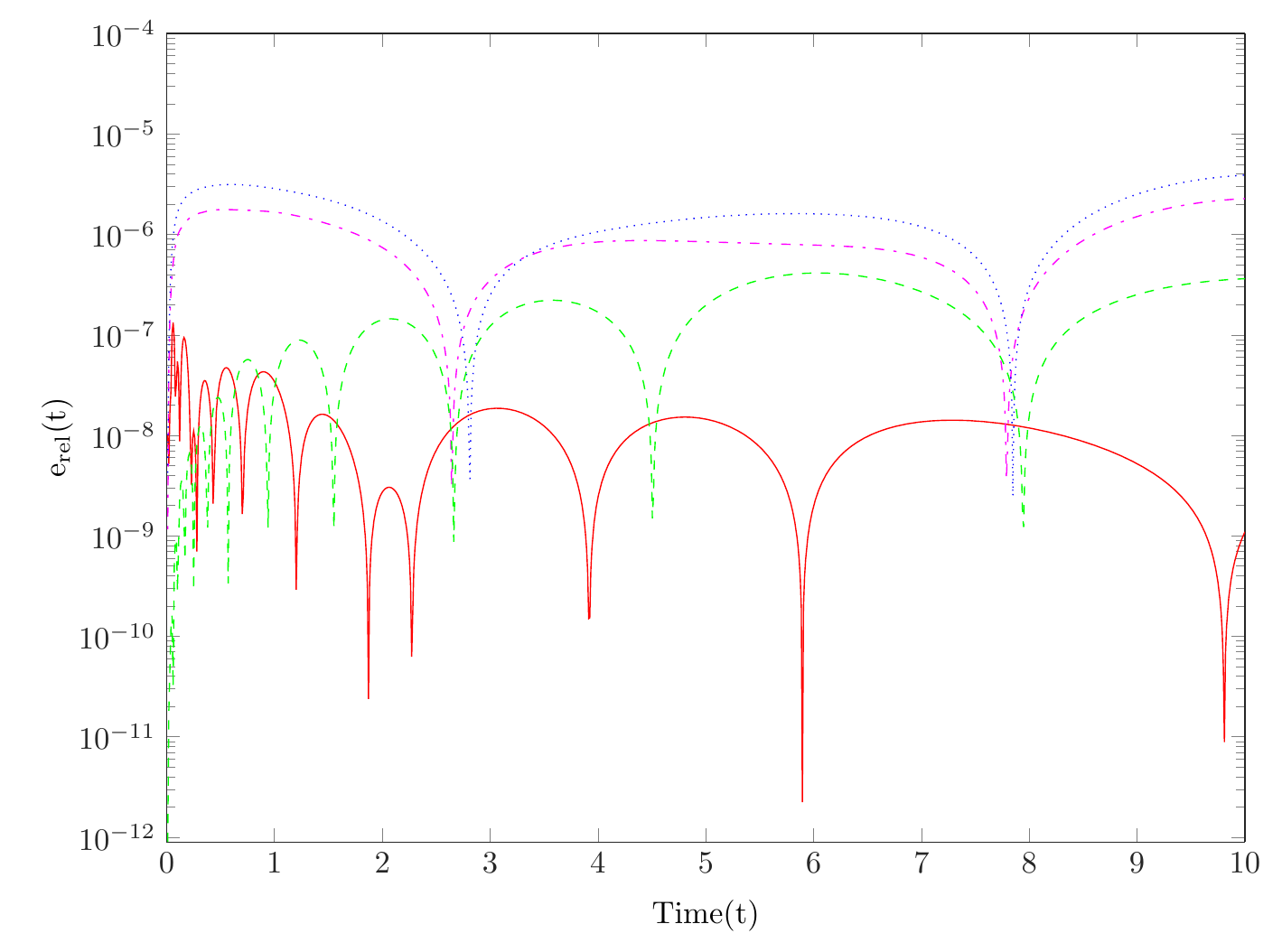}}
\caption{ Non-linear RC-circuit with $u(t)=cos(2\pi (t/10)+1)/2$; ROM via imm-s (\protect\greenline), igmm-s (\protect\blueline), igmm-r2 (\protect\magentaline) and  igmm-r3 (\protect\redline). }\label{rcsin}
\end{center}\end{figure}

The results in Figures \ref{rc2_exp} and \ref{rcsin} clearly show that the higher transfer function projection technique (igmm-r3), using the regular form, gives better reduced order models comparatively. Also, it is important to note that (igmm-r3) is not only ensuring higher-order moment matching but also uses the matrices from the nonlinear terms  in the construction of the matrices $V$ and $W$. This improves the quality of the reduced model. 
\subsection{Burgers' Equation}
\begin{sloppypar}
As a second example, we consider the boundary control problem for 1D Burgers' equation on the domain ${\Omega = (0,1)\times (0,T)}$, that results in the following set of equations \cite{morKunV08}:
\end{sloppypar}
\begin{align}
\begin{split}
v_{t}+v\cdot v_{x}                   & = \nu \cdot v_{xx} \ \ in \ (0,1)\times (0,T), \\
\alpha v(0,\cdot )+\beta v(0,\cdot ) & = u(t)           \ \ \ \ t \in (0,T), \\
v_{x}(1,t)                           & = 0,             \ \ \ \ \   t \in   (0,T), \\
v(x,0)                               & = v_{0}(x)       \ \   x \in  (0,T).
\end{split}
\end{align}
Here $\nu$  is the viscosity and  $v(0,x)$ is the initial condition of the system. The above PDE can be converted to the desired quadratic-bilinear system using standard discretisation. We select $\nu=0.05$ and $n=1000$ and then compute a reduced model of order $r$ using different techniques.\par
The size of the required reduced order model $r$ is fixed to $23$ by carefully choosing the number of higher multi-moments being matched. Using the reduction techniques discussed above, the response and the absolute error are shown in Figures \ref{b2_exp} and \ref{b2} for $u(t)=e^{-t}$ and $u(t)=cos(2\pi (t/10)+1)/2$, respectively. The maximum errors $e_{max}$ are shown in Table \ref{tablebm} and the parameters used in each technique are shown in Table \ref{tabelb}. \par
\begin{table}
\centering
\begin{tabular}{|c|c|c|c|}
\hline
MOR technique 	& $r$ &$e_{max}$ for $u(t)=e^{-t}$	&$e_{max}$ for $u(t)=cos(2\pi (t/10)+1)/2$\\
\hline
(imm-s2) & 23 	& $0.0083$	&$0.0377$\\
\hline
(igmm-s2) & 24 	& $0.0027$	&$0.0254$\\
\hline
(igmm-r2)	& 22  & $0.0019$	&$0.0227$\\
\hline
(igmm-r3) 	& 23 & $0.0007$	&$0.0063$\\
\hline
\end{tabular}
\caption{Maximum errors for Burgers' Equation }
\label{tablebm}
\end{table}
\begin{table}
\centering
\begin{tabular}{|c|c|c|c|c|}
\hline
MOR technique                    & $r$ & $m$ & $\sigma$          & $\mathcal{O} (p,q,\ell) $ \\ \hline
(imm-s2)				    & 23  & 5          & {[}0.01,1,10,100,1000,10000{]} & $(1,1,-)$         \\ \hline
(igmm-s2)    				    & 24  & 2          & {[}0.1,10,1000{]} & $(2,2,-)$         \\ \hline
(igmm-r2)                      & 22  & 2          & {[}0.1,10,1000{]} & $(3,2,-)$         \\ \hline
(igmm-r3)                      & 23  & 1          & {[}0.1,10{]} & $(1,1,2)$          \\ \hline
\end{tabular}
\caption{Parameters for Burgers' Equation }\label{tabelb}
\end{table} 

\begin{figure}
\begin{center}
\subfloat[Output]{\label{b2_a}\includegraphics[width=0.45\linewidth, height=0.35\linewidth]{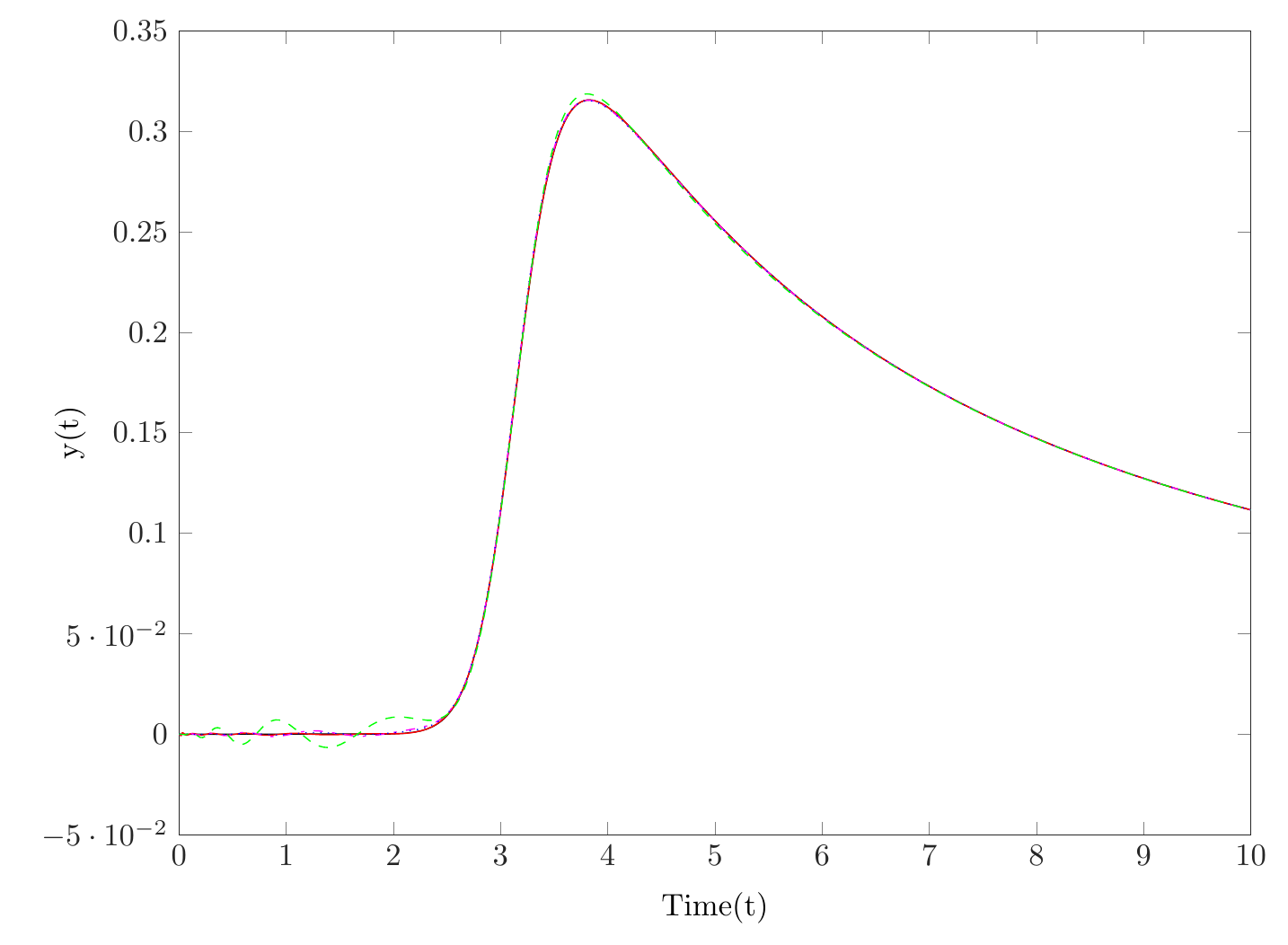}}
\subfloat[Absolute Error]{\label{b2_b}\includegraphics[width=0.45\linewidth, height=0.35\linewidth]{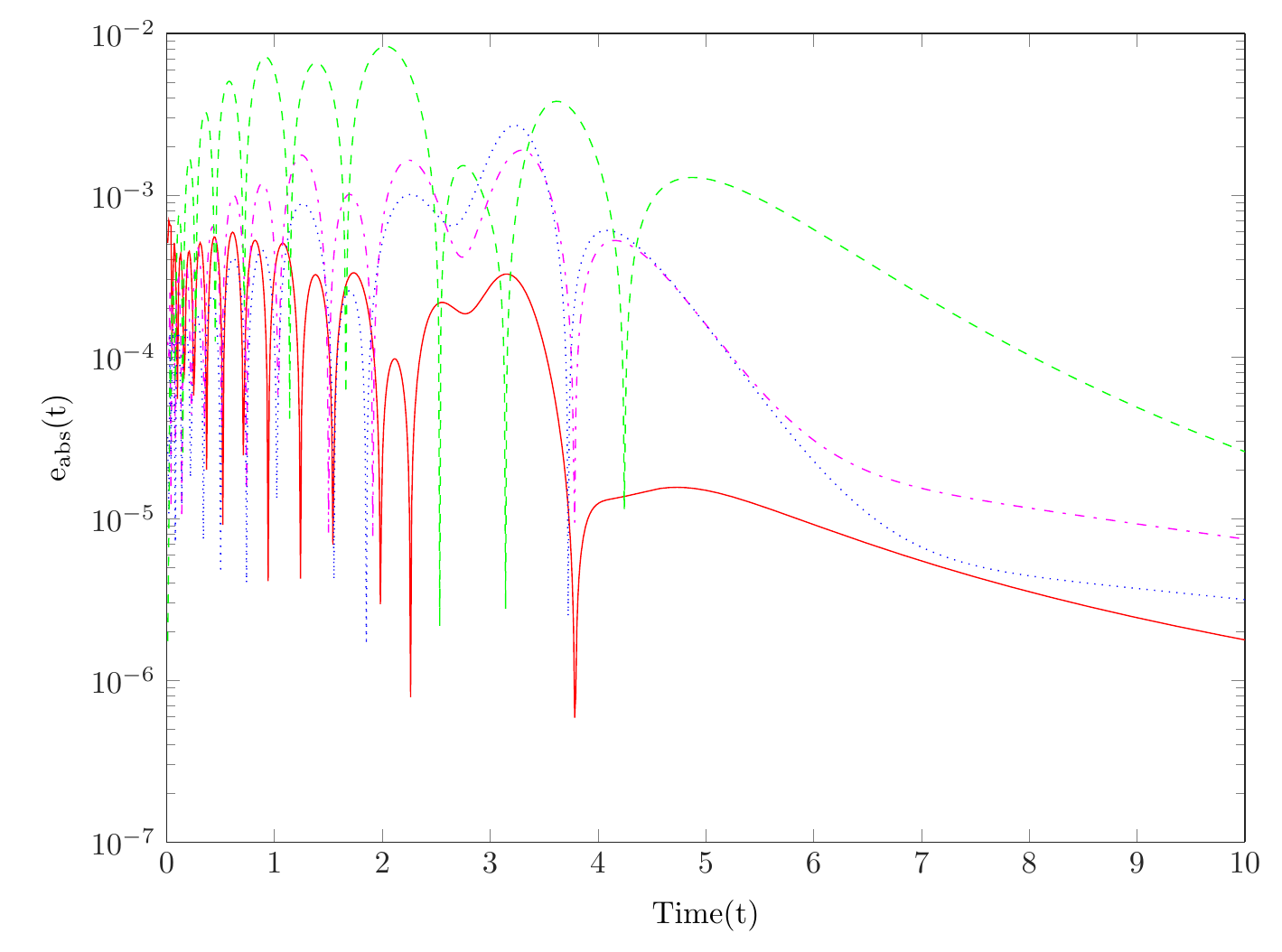}}
\caption{Burgers' Equation with $u(t)=e^{-t}$; ROM via imm-s (\protect\greenline), igmm-s (\protect\blueline), igmm-r2 (\protect\magentaline) and  igmm-r3 (\protect\redline).}\label{b2_exp}
\end{center}\end{figure}
\begin{figure}
\begin{center}
\subfloat[Output]{\label{b2_c}\includegraphics[width=0.45\linewidth, height=0.35\linewidth]{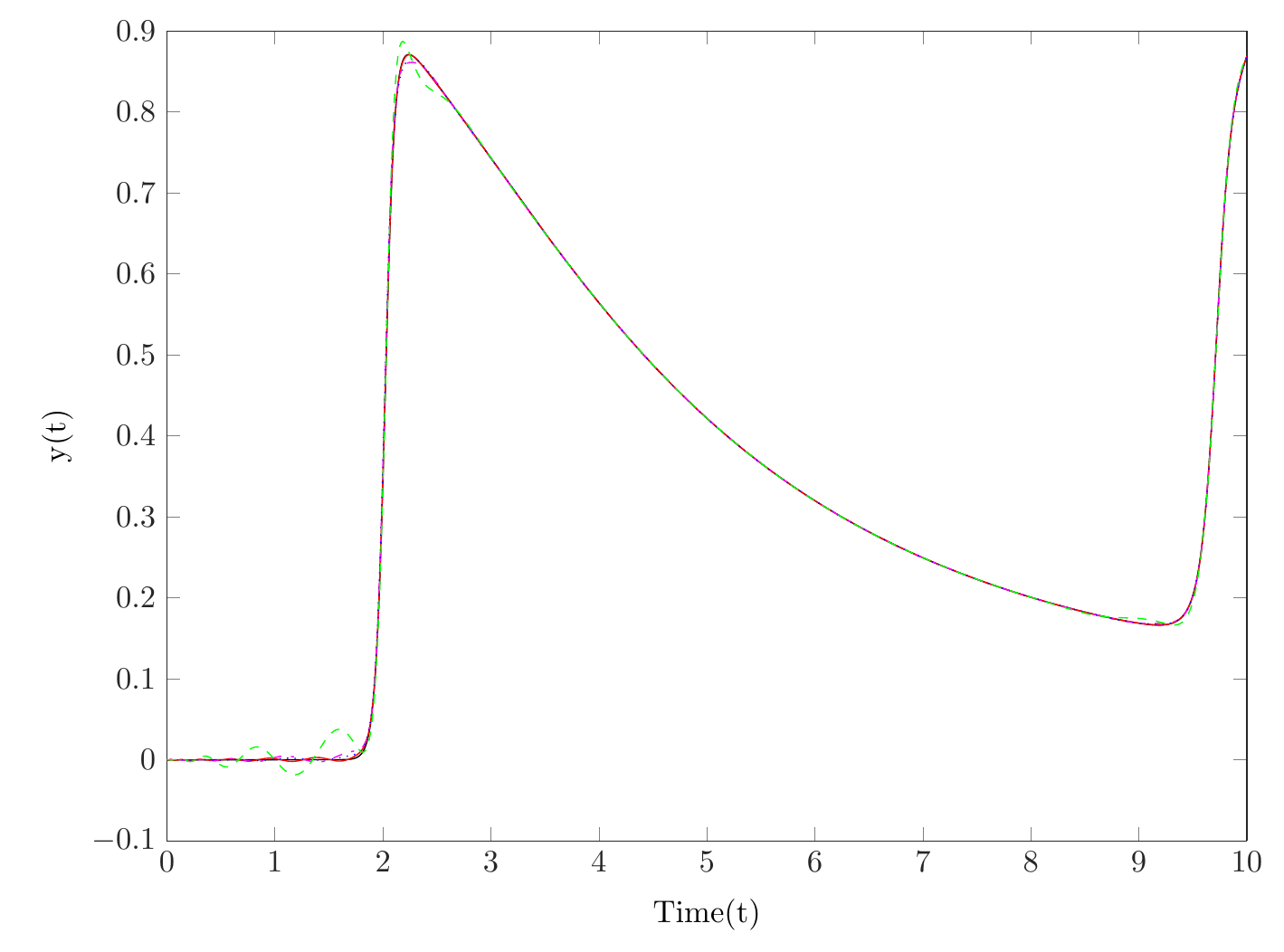}}
\subfloat[Absolute Error]{\label{b2_d}\includegraphics[width=0.45\linewidth, height=0.35\linewidth]{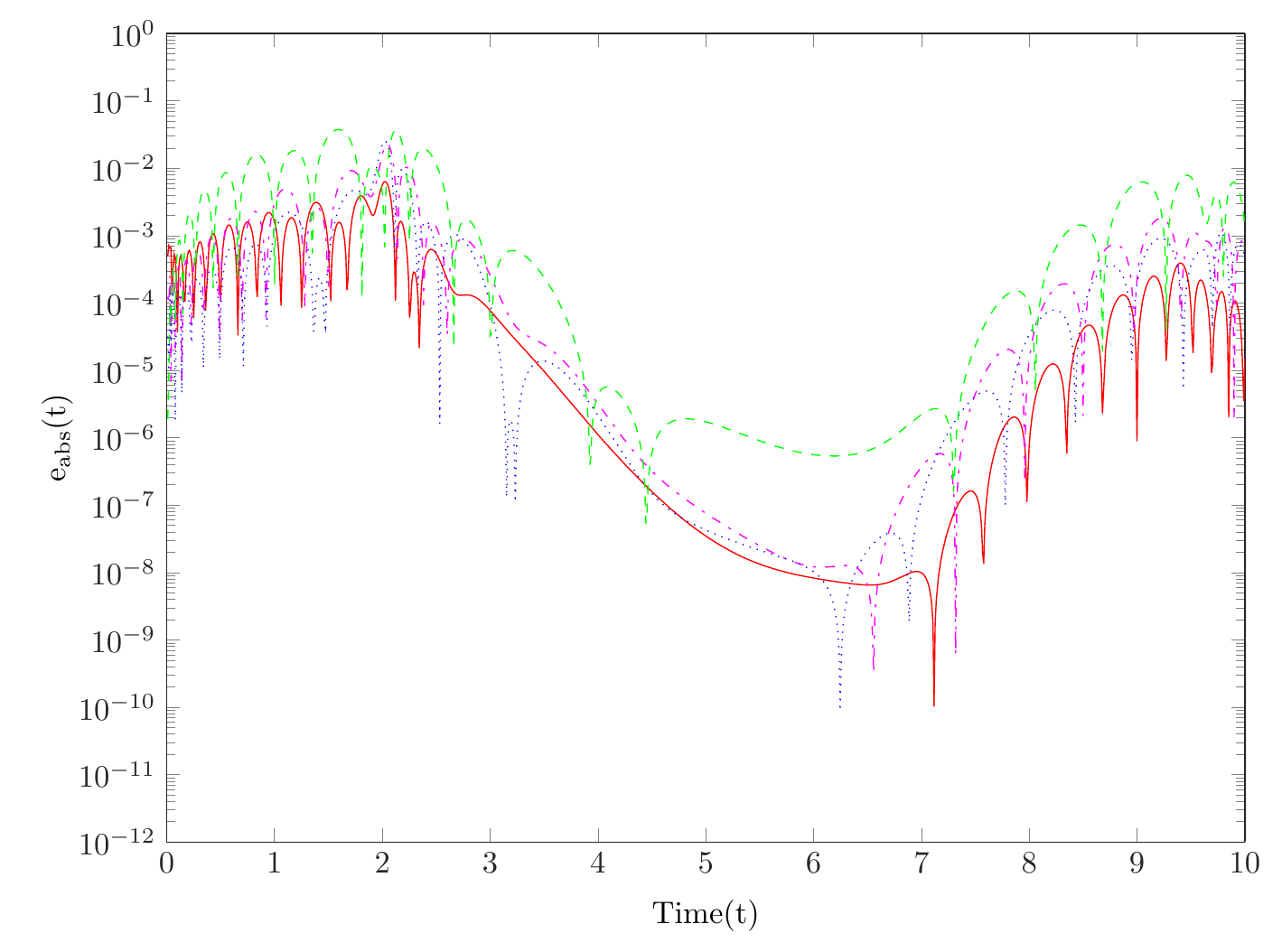}}
\caption{Burgers' Equation with $u(t)= cos(2\pi (t/10)+1)/2$; ROM via imm-s (\protect\greenline), igmm-s (\protect\blueline), igmm-r2 (\protect\magentaline) and  igmm-r3 (\protect\redline). }\label{b2}
\end{center}
\end{figure}
%
Although the results here are comparable, the proposed method (igmm-r3) is slightly better than the existing techniques. It is observed that increasing the value of $\mathcal{O} (p,q,\ell)$, reduces the absolute error to a much smaller value as more multi-moments are matched. 
\subsection{FitzHugh-Nagumo System}
Our last example is the FitzHugh-Nagumo system where the activation and deactivation dynamics of a spiking neuron are modeled by coupled non-linear PDEs
\begin{align}
\label{fhn}
\begin{split}
\epsilon v_{t}(x,t)&=\epsilon^{2}v_{xx}(x,t)+f(v(x,t))-w(x,t)+g,\\
w_{t}(x,t)&=hv(x,t)-\gamma w(x,t)+g,
\end{split}
\end{align}
with $f(v)=v(v-0.1)(1-v)$ and boundary conditions
\begin{align*}
v(x,0)&=0, &w(x,0)&=0, &x\in [0,1],\\
v_{x}(0,t)&=-i_{o}(t), &v_{x}(1,t)&=0, &t\geq 0, 
\end{align*}
where $\epsilon =0.015$, $h=0.5$, $\gamma=0.05$ and $i_{o}(t)=5\times 10^{4}t^{3}e^{-15t}$. A system of ODEs with cubic nonlinearities are obtained by applying the standard finite difference  method. With the introduction of a new variable $z_{i}=v_{i}^{2}$; $\dot{z}_{i}=2v_{i}\dot{v}_{i}$, the cubic terms can be represented in quadratic form. This gives a quadratic-bilinear system which will involve three dynamical variables: $v_{i}$, $w_{i}$ and $z_{i}$. For $\bar{n}$ discretization points, we get a quadratic-bilinear system of dimension $n=3\bar{n}$.  Note that to incorporate the effects of the variable $g$ and the initial stimulus, the QBDAE system needs to have two inputs and two outputs.  
\par In our experiments, we set $\bar{n}=750$, so that $n=2250$, for which we used linear IRKA from \cite{morGugAB08} to identify interpolation points $\sigma_{i}$. The reduced model was getting unstable by using (imm-s) for order greater than $16$, so we selected the first four points to obtain a reduce model of order $16$ that reproduced the results in \cite{Ahmad2019}. The proposed method (igmm-r3) is then used with the first two interpolation points from IRKA to obtain a reduced model of order $24$. The results are shown in Figure \ref{f_a} where it is clear that the maximum error of (imm-s2) is significantly larger than (igmm-r3). While the sizes of the two reduced models are different, the comparison clearly shows the usefulness and accuracy of (igmm-r3). A 3-D plot for the limit cycle behavior of the original and the reduced systems is presented in Figure \ref{f_b}. Since the FitzHugh-Nagumo system has two inputs and two outputs, both (imm-s2) and (igmm-r3) are using tangential directions, that correspond to the selected interpolation points from linear IRKA.

\begin{figure}
\centering
\subfloat[Transient Response]{\label{f_a}\includegraphics[width=0.45\linewidth, height=0.35\linewidth]{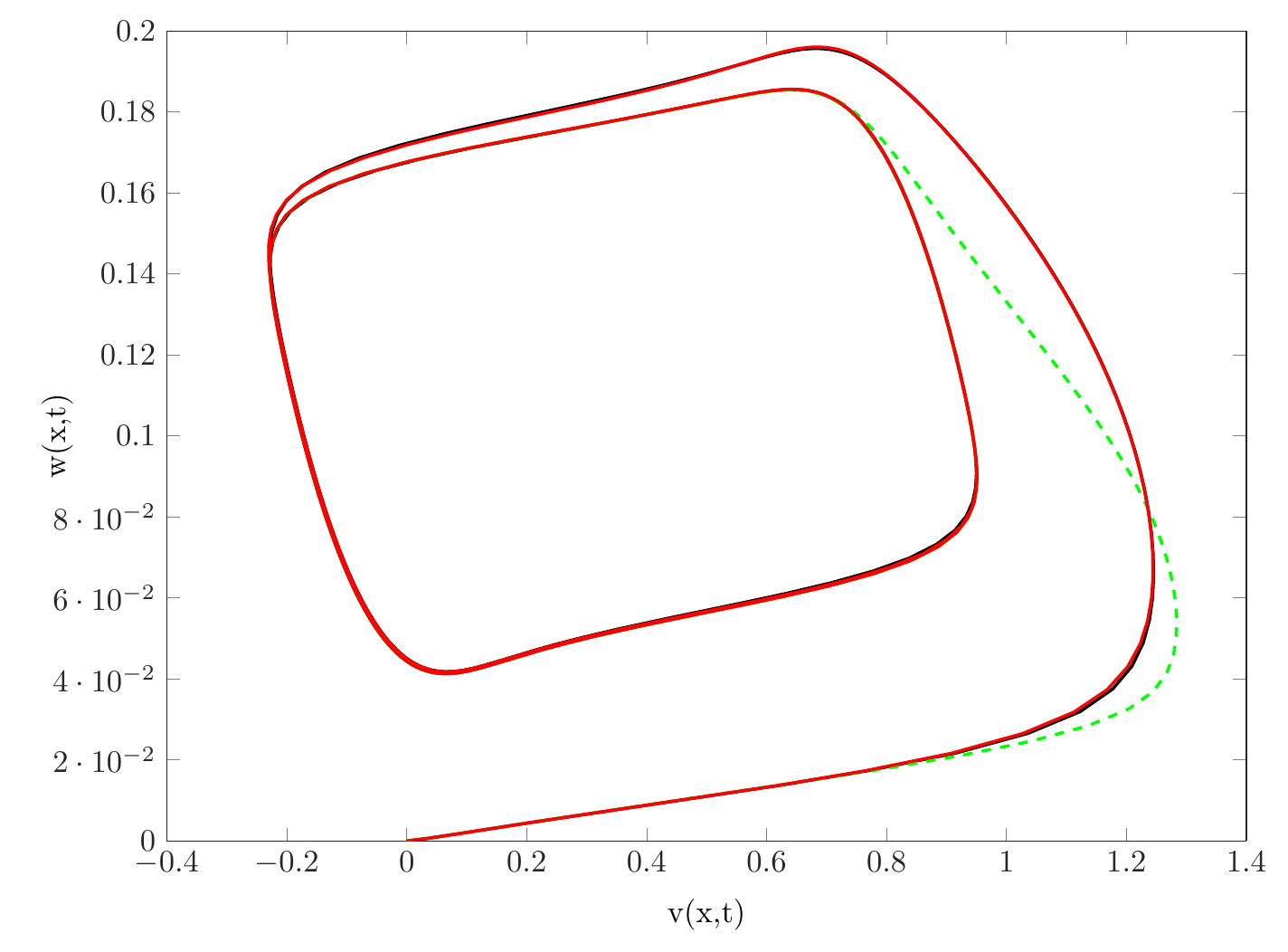}}
\subfloat[3-D View of Transient Response]{\label{f_b}\includegraphics[width=0.45\linewidth, height=0.35\linewidth]{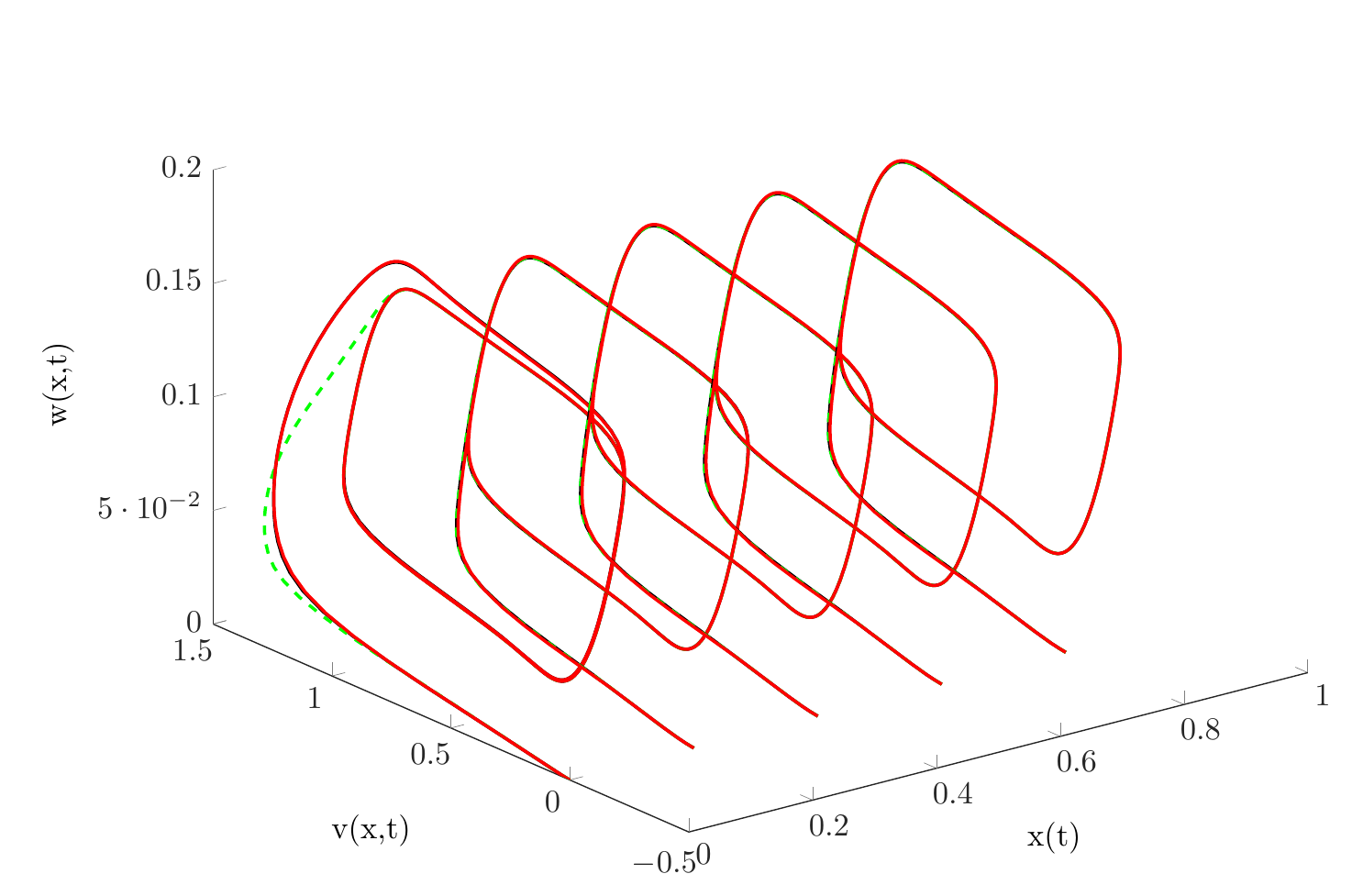}}
\caption{FitzHugh-Nagumo system with $u(t)=5\times 10^{4}t^{3}exp(-15t)$; ROM via (imm-s \protect\greenline) and  (igmm-r3 \protect\redline).}
\centering
\end{figure}
\section{Conclusion}
We showed the use of regular multivariate transfer functions for model reduction of quadratic-bilinear systems through generalized multi-moment matching by rational projection. Existing techniques are restricting the concept of generalized moment matching to the first two symmetric transfer functions. We have extended it to the first three transfer functions and in the regular form. The approximation quality of the reduced models are compared and it is observed that matching the third transfer function improves the results. An important future work could be the choice of the interpolation points and tangent directions for the generalized multi-moment matching problem. 
\bibliographystyle{cas-model2-names}
\bibliography{r}

\end{document}